\documentclass[article,12pt]{article}
\usepackage{amsmath,amssymb,cite}
\usepackage{amsthm}
\usepackage{graphics,epsfig,color}
\usepackage[mathscr]{euscript}

\usepackage{comment}
\usepackage{enumerate}

\usepackage{graphicx}
\usepackage{graphics}
\usepackage{caption}
\usepackage{subfig}
\captionsetup{tableposition=top,figureposition=bottom,font=small}

\usepackage{hyperref}
\usepackage{tikz}
\usetikzlibrary{arrows}

\usepackage{tabularx}
\usepackage{bm}
\usepackage{multirow}
\usepackage{pifont}

\usepackage{bbold}

\theoremstyle{plain}
\newtheorem{theorem}{Theorem}[section]
\newtheorem{proposition}[theorem]{Proposition}
\newtheorem{lemma}[theorem]{Lemma}
\newtheorem{corollary}[theorem]{Corollary}

\theoremstyle{definition}
\newtheorem{definition}[theorem]{Definition}

\theoremstyle{remark}
\newtheorem{remark}[theorem]{Remark}
\newtheorem{example}[theorem]{Example}

\numberwithin{equation}{section}
\numberwithin{theorem}{section}




\newcommand{\wc}{\mathcal{W}}
\newcommand{\wcm}{\mathcal{W}_{\min}}
\renewcommand{\epsilon}{\varepsilon}
\renewcommand{\tilde}{\widetilde}
\renewcommand{\hat}{\widehat}

%
\newcommand{\xmark}{\ding{55}}%


\title{With a little help from my friends: essentiality vs opportunity in  group criticality\footnote{Michele Aleandri and Marco Dall'Aglio are member of the Gruppo Nazionale per l’Analisi Matematica, 
la Probabilità e le loro Applicazioni (GNAMPA) of the Istituto Nazionale di Alta Matematica (INdAM). The authors would like to thank and Elizabeth Mary Bevan for her careful linguistic revision of the first version.}}

\author{M. Aleandri\footnote{Scuola Normale Superiore, P.za dei Cavalieri 7, 56126, Pisa, Italy \textit{michele.aleandri@sns.it}. .
}\,  and M. Dall'Aglio\footnote{LUISS University, Viale Romania 32, 00197 Rome, Italy. \textit{mdallaglio@luiss.it}.}}   

\date{July 19, 2024 }

\begin{document}

	\maketitle
	\begin{abstract}
		We define a notion of the criticality of a player for simple monotone games based on cooperation with other players, either to form a winning coalition or to break a winning one, with an essential role for all the players involved. We compare it with the notion of differential criticality given by Beisbart that measures power as the opportunity left by other players.\\
		We prove that our proposal satisfies an extension of the strong monotonicity introduced by Young, assigns no power to null players and does not reward free riders, and can easily be computed from the minimal winning and blocking coalitions. An application to the Italian elections is presented. \\ Our analysis shows that the measures of group criticality defined so far cannot weigh essential players while only remaining an opportunity measure. We propose a group opportunity test to reconcile the two views. \\
		
		\noindent\textbf{Keywords}: Group criticality - Decisiveness - Essential players - Opportunity - Monotonicity - Power rankings
	\end{abstract}
	\thispagestyle{empty}
	\section{Introduction}
	The commonly accepted distinction in game theory between non-cooperative and cooperative games would lead one to think that the former always examines the actions of individual players against the other players, while the latter focuses exclusively on groups of players, or coalitions, acting together. The reality, however, is more subtle. The notion of Nash equilibrium, a pillar of non-cooperative games, has been strengthened to include joint deviations by groups in the strong Nash equilibrium (see Aumann \cite{a59}). Conversely, many research works in cooperative game theory devoted great deal of effort into  defining indices that measure the importance of single agents as averages of their marginal relevance with respect to the other players (as in the Shapley value \cite{s51}) as if players were isolated in their action. 
	
	Focusing on cooperative games, interactions among groups of players take place in many phases: before playing the game in the process of coalition formation, and, once the coalitions are formed, interactions may continue among competing coalitions that define a partition of the players, and among players in the same coalition. We refer, for instance, to games in partition function form and to Koczy \cite{k18} for a recent review on the topic. 
	
	A less numerous body of work increased the number of players involved in the marginal increment analysis needed to measure player relevance. For simple monotone games, recent contributions have extended the notion of criticality to include players that may change an outcome of a game only through the help of other players, originating criticalities of higher order or rank. 
	The first proposal, the rank of differential criticality ($d$-criticality for short), was given by Beisbart \cite{b10}. Central to this definition is keeping criticality as a measure of the opportunity left by other players that are still valid for the higher ranks. More recently, Dall'Aglio et al. \cite{dfm16} introduced the notion of the order of criticality of a player to characterize situations where players may cooperate to break a winning coalition, with an essential role for all the players involved.
 
	In this work, the notion of the order of criticality provided by the latter proposal is extended to the losing coalitions to define the group essential criticality or $g$-criticality rank of a player. This creates a common setting for the comparison of the two notions of criticality. It turns out that both notions share a common core – that of the essential minimal critical (or $m$-critical) players. In turn, $m$-criticality determines the rank of $d$-criticality of the players that are not $m$-critical, so that the differences in the average rank of $d$-criticality between players are exclusively determined by their probability of not being minimally critical. Null players are never $g$-critical (and $m$-critical), while they are always $d$-critical in some rank. Moreover, $g$-criticality is the only notion that satisfies an extension of the strong monotonicity as introduced by Young, \cite{y85}, measured by the first order of stochastic dominance between vectors of average probability for all ranks. On the other hand, $d$-criticality is the only notion that measures criticality as the opportunity left by the other players. We question whether this opportunity test is the valid benchmark for comparing the action of groups and we close by proposing a notion of group opportunity that is satisfied by $g$-criticality.

	The paper is structured as follows: Section 2 recalls the definitions and results of previous works. In Section 3, the group essential criticality is introduced. In Section 4, some procedures for computing the various notions of group criticality are provided.
	In Section 5, several features of the competing definitions such as their monotonicity and their sensitivity to the null players are compared. In Section 6  two $g$-critical indexes are introduced and evaluated for the Italian election results of 2013, 2018 and 2022. Section 7 is devoted to a new look at the notion of opportunity. Section 8 concludes. 
	
	\section{Preliminary notions and available results}
	A \emph{simple cooperative game with transferable utility } (TU-game) is a pair $(N,v)$, where $N=\{1,2,\ldots,n\}$ denotes the finite set of players and $v:2^n\to\{0,1\}$ is the \emph{characteristic function}, with $v(\varnothing)=0$, $v(S)\leq v(T)$ for all $S,T$ subsets of  $N$ such that $S\subseteq T$ and $v(N)=1$.\\
	Given a coalition $S \subseteq N$, if $v(S)=0$ then $S$ is a \emph{losing} coalition, while if $v(S)=1$, then $S$ is a \emph{winning} coalition.  
	The marginal contribution of player $i$ to $S$ is defined as:
	\[
	v_i'(S)=\begin{cases}
		v(S)-v(S \setminus \{i\}) & \mbox{if } i \in S;
		\\
		v(S \cup \{i\} )-v(S ) & \mbox{if } i \notin S.
	\end{cases}
	\]
	The function $v_i'$ is called the \emph{derivative} of $v$ with respect to (wrt hereafter) $i$. A player $i$ such that $v_i'(S)=0$ for all $S\subseteq N$ is called \emph{null player}.
	
	We let $\wc=\{S \subseteq N: v(S)=1\}$ be the set of winning coalitions and $\wcm=\{W\in\wc: \nexists S\in\wc, S\subset W\}$ be the set of minimal winning coalitions. Moreover, let $\mathcal{B}=\{S\subseteq N: v(N)-v(N\setminus S)=1\}$ be the set of blocking coalitions and $\mathcal{B}_{\min}=\{B\in\mathcal{B}: \nexists S\in\mathcal{B}, S\subset B\}$ be the set of minimal blocking coalitions.
	
	Beisbart \cite{b10} proposes measures that ``quantify the extent to which a voter can make a difference as a member of a group".
	
	\begin{definition}(Definitions 3.1, 4.1, 4.2, A.1 and A.2 in \cite{b10})\label{def:dcritical}\noindent\\
		A coalition $G \subseteq N$ is \emph{critical} wrt a coalition $S$, if $S \cup G \in \mathcal{W}$ and $S \setminus G \notin \mathcal{W}$. If $G$ is critical wrt to $S$, $G$ is called \emph{critical inside (outside, resp.)} $S$, if $S \in \mathcal{W}$ ($S \notin \mathcal{W}$, resp.). A  player $i\in N$ is \emph{essential} for $G$ being critical wrt $S$ if $G\setminus\{i\}$ is not critical wrt $S$.\\
		A player  $i\in N$ is \emph{$d$-critical} of integer rank $\kappa$ wrt $S \subseteq N$ if there is $G \subseteq N$, with $i \in G$ and $|G|=\kappa$,  such that $G$ is critical wrt $S$ and $G$ has minimal cardinality, namely no other coalition $G'$ with $|G'|<\kappa$ and $i \in G'$ is critical wrt $S$.\\
        A player $i \in N$ is \emph{$e$-critical} of integer rank $\kappa$ wrt $S$, if there is $G \subseteq N$, with $i \in G$ and $|G|=\kappa$,  such that $G$ is critical wrt $S$ and $i$ is essential.
	\end{definition}

	For any coalition $S$, every player $i$ is $d$-critical of some rank $\kappa_i$ wrt S and we may have $\kappa_i\neq \kappa_j$ for pairs $i$ and $j$ of players. Moreover a player may fail to be $e$-critical of any rank.\\

Based on the above definitions, the author defines a voting power index as the probability for a player of being critical of a given rank with respect to a random coalition generated according to a probability distribution called {\em voting profile} (see Definition 4.3 in \cite{b10} for details). This extends the classical measurement of power based on the player’s solitary effort. Beisbart points out that the proposed indices should not depend on the player’s action, but only on the opportunity that the other players offer to
    the observed player. The notion of $d$-criticality is the only one satisfying this requirement and Beisbart's work focuses on $d$-criticality's properties, confining $e$-criticality to a bare definition in the work's appendix. Moreover, the given definition allows for many values of the rank, i.e.\ it defines a correspondence, since the rank of $e$-criticality for player $i$ wrt to a coalition $S$ depends on the chosen group of cooperating players in $G$.\\
    
	More recently, Dall'Aglio et al. \cite{dfm16} and Aleandri et al. \cite{adfm21} gave another definition of criticality that involves several players. This notion only considers the criticality inside $S$, i.e.\ when the player acts with others to make the coalition $S$ lose. A similar definition can be given for the case where the player acts to turn the coalition $S$ into a winner.
	\begin{definition}(Definition 2 in \cite{dfm16})\label{def:orderneg} Let $\kappa\geq 1$ be an integer and let $S\subseteq N$.\noindent\\
       Suppose $S$ is a winning coalition with $|S|\geq \kappa$. We say that a player $i$ is \footnote{In the original work, \cite{dfm16}, the player was simply referred to  as critical. We add here the term ``inside" to distinguish it from the complementary situation described in Definition \ref{def:orderneg}.2}\emph{inside critical of order} $\kappa$ wrt a  coalition $S$, and write $\rho^-=\kappa$, if $\kappa-1$ is the minimum integer such that there is a coalition $K\subseteq S\setminus\{i\}$ of cardinality $\kappa-1$ with 
		\begin{equation}\label{critical_gap}
			v\big(S\setminus K\big)-v\big(S\setminus(K\cup\{i\})\big)=1.
		\end{equation}
        Suppose $S$ is a losing coalition with $|S|\leq n-\kappa$. We say that a player $i$ is \emph{outside critical of order} $\kappa$ wrt a  coalition $S$, and write $\rho^+_i=\kappa$, if $\kappa-1$ is the minimum integer such that there is a coalition $K\subseteq N\setminus \left( S\cup \{i\}\right)$ of cardinality $\kappa-1$ with 
		\begin{equation} \label{critical_gap_pos}
			v\big(S\cup (K\cup\{i\})\big)-v\big(S\cup K\big)=1.
		\end{equation}
	\end{definition}

    The notion of inside criticality has been further investigated in the context of connection games in Dall'Aglio et al.\ \cite{dfm19a}, to define monotone indices of power (Dall'Aglio et al. \cite{dfm19b}) and to rank the players according to a lexicographic criterion (Aleandri et al.\cite{adfm21}). The notion of outside criticality has been seen as the dual counterpart of the inside criticality  in (Aleandri et al. \cite{afm22}) when the desirability binary relations between players is total, i.e. it affects every pair of players.

	\begin{example}\label{example1}
		Take $\mathcal{W}=\{ \{1,3\},\{1,2,3\}\}$ and $S=\{1\}$. Player 2 is $d$-critical and outside critical of rank/order $2$ via the coalition $G=\{2,3\}$, but it is not $e$-critical of any rank, because there is no way to make player 2 essential in changing the outcome. We observe that player $3$ is $d$-critical of rank $1$ via the smaller coalition $G'=\{3\}$. Moreover player $3$ is $e$-critical of rank $1$ through $G'$ and $e$-critical of rank $2$ through $G$. Finally, player 1 is $d$-critical of order $2$ via the coalition $G''=\{1,3\}$, but is neither inside critical nor e-critical of any order/rank. 
	\end{example}

	\section{Group criticality}
	In this section, we build upon the definitions given by Breisbart \cite{b10} and by Dall'Aglio et al.\cite{dfm16} to come up with a notion of group criticality that draws elements from both sources. We then make a comparison between the new proposal and that of $d$-criticality introduced by Beisbart.
	
	The non-essential players in a critical coalition have no effective power and it is natural to restrict our attention to essential players only. Moreover, Beisbart's proposal of $e$-criticality does not require all players in $G$ to be essential. Instead, we make this feature crucial in the following definition.
 
	\begin{definition}\label{def:essential critical}
		A coalition $G\subseteq N$ is called \emph{essential critical}, or simply \emph{essential},  wrt a coalition $S\subseteq N$, if each agent $i\in G$ is essential for $G$ being critical wrt $S$. Call $\mathcal{G}^e(S)$ the set of all essential coalitions wrt $S$.\\
		A player  $i\in N$ is \emph{group essential critical}, or simply \emph{$g$-critical}, of integer rank $\kappa$ wrt $S \subseteq N$ if there is  $G\in \mathcal{G}^e(S)$ of cardinality $|G|= \kappa$, containing player $i$, such that no other coalition $G'\in \mathcal{G}^e(S)$ exists with $i \in G'$ and $|G'|<\kappa$.
	\end{definition}
	
     For $S$ losing, every $G$ essential wrt $S$ has $G \cap S = \emptyset$ and so all the group essential players wrt $S$ belong to $N \setminus S$; for $S$ winning, every $G$ essential wrt $S$ has $G \cap S = G$ and so all the group essential players wrt $S$ belong to $S$.

The essential coalitions form a generating set for the set of all critical coalitions for a given $S$.
	\begin{proposition}
		Consider $S\subseteq N$ and $G$ a critical coalition wrt $S$. If we remove the non-essential players from $G$, we still have a critical coalition. Moreover, any coalition $G'$ such that $G\subseteq G'$ is critical wrt $S$.
	\end{proposition}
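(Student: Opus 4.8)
The plan is to prove the two assertions separately. The \emph{``moreover''} part is pure monotonicity: if $G\subseteq G'$ and $G$ is critical wrt $S$, then $S\cup G\subseteq S\cup G'$ and $S\setminus G'\subseteq S\setminus G$, so that $v(S\cup G')\ge v(S\cup G)=1$ and $v(S\setminus G')\le v(S\setminus G)=0$; hence $S\cup G'\in\wc$ and $S\setminus G'\notin\wc$, i.e. $G'$ is critical wrt $S$.

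For the first assertion I would read ``removing the non-essential players'' as an iterative procedure: at each stage delete one player that is non-essential \emph{for the current coalition}, then re-examine. The engine of the argument is tautological --- by the definition of essentiality, ``$j$ is non-essential for $G$ being critical wrt $S$'' means precisely that $G\setminus\{j\}$ is still critical wrt $S$. So, starting from $G$, I repeatedly remove a non-essential player: criticality is preserved at every step and $|G|$ strictly decreases, so after finitely many steps one reaches a coalition $G^*\subseteq G$ that is critical and has no non-essential player, i.e. $G^*\in\mathcal{G}^e(S)$. It then remains to observe that the procedure never empties the coalition: $\varnothing$ is never critical wrt $S$ (that would require $v(S)=1$ and $v(S)=0$ simultaneously), so a critical singleton has its unique player essential, and the reduction always stops at a non-empty (essential) coalition. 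As a by-product this also gives the cover statement preceding the proposition: every critical coalition contains an essential one, and together with the ``moreover'' part the essential coalitions generate, as an up-set, all critical coalitions for $S$.

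The point to be careful about --- and the reason the reduction must be carried out one player at a time --- is that essentiality is \emph{not} monotone in the coalition: a player that is non-essential for $G$ can become essential for some $G\setminus\{j\}$, and consequently deleting \emph{simultaneously} all players that are non-essential for $G$ may destroy criticality (indeed the set of players essential for $G$ can be empty even though $G$ itself is critical). Apart from this bookkeeping subtlety and the termination/non-emptiness check, there is no genuine combinatorial content: everything rests on monotonicity of $v$ and on unwinding the definition of ``essential''.
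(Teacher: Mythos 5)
Your proof is correct, and on the first assertion it is more careful than the paper's own argument. The ``moreover'' part coincides with the paper's: monotonicity of $v$ applied to $S\cup G\subseteq S\cup G'$ and $S\setminus G'\subseteq S\setminus G$. For the first part the paper writes $G=E\cup nE$, observes that $G\setminus\{i\}$ is critical for each \emph{single} non-essential $i\in nE$, and declares the claim proved; it thus removes all non-essential players in one shot without justifying that the single removals compose. Your iterated deletion --- removing one currently non-essential player at a time, re-evaluating essentiality, and noting that the process cannot terminate at $\varnothing$ because the empty coalition is never critical --- is exactly what is needed, and your warning about simultaneous removal is substantive rather than bookkeeping: essentiality is not monotone under shrinking $G$, and the statement in its literal one-shot reading can fail. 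For instance, with $\wcm=\{\{1,2\},\{1,3\}\}$, $S=\varnothing$ and $G=\{1,2,3\}$, players $2$ and $3$ are non-essential while $1$ is essential, yet $E=\{1\}$ is not critical wrt $S$; and with $\wcm=\{\{1\},\{2\}\}$, $S=\varnothing$, $G=\{1,2\}$, no player of $G$ is essential at all. What your version proves --- every critical coalition contains an essential critical sub-coalition, and every superset of a critical coalition is critical --- is precisely the form of the result the paper actually uses later (e.g.\ the cover claim and the extraction of minimal essential coalitions in the proof of Proposition \ref{prop:m+1=d}), so your route repairs the paper's argument rather than merely rephrasing it; an equivalent non-iterative fix would be to take a critical subset of $G$ of minimum cardinality and note that all of its members must be essential.
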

	\begin{proof}
		Let us write $G=E\cup nE$, where $E$ is the set of essential players and $nE$ is the set of non-essential players. According to the definition of a non-essential player, for all $i\in nE$, the coalition $G\setminus\{i\}$ is critical wrt $S$. The first part of the proposition is proved. For the second part it is sufficient to observe that $S\cup G'\supseteq S\cup G\in\wc$ and $S\setminus G'\subseteq S\setminus G\notin \wc $.
	\end{proof}
	
	Clearly, essential coalitions may have different cardinalities. 	We are now able to create the link between Beisbart's approach \cite{b10} and that of Dall'Aglio et al. \cite{dfm16}.
	\begin{proposition}
        Take $S\subseteq N$, then the following statements are equivalent:
        \begin{enumerate}[a)]
            \item player $i$ is $g$-critical of rank $\kappa$ wrt $S$;
            \item player $i$ is $e$-critical of minimal rank, i.e.\ its rank wrt $S$ is $\min E_i(S)$,  where 
            $$E_i(S)=\{ k\in\mathbb{N}:\mbox{ player i is $e$-critical of rank k wrt S} \}  ;$$
            \item player $i$ is either outside or inside critical of order $\kappa$ wrt $S$.
        \end{enumerate}
	\end{proposition}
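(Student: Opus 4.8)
The plan is to prove (a)$\Leftrightarrow$(b) directly from the combinatorics of essential coalitions, and then (a)$\Leftrightarrow$(c) by splitting into the two admissible criticality scenarios. The main tool is a \emph{pruning lemma} refining the previous Proposition: if $G$ is critical wrt $S$ and $i\in G$ is essential for $G$ being critical, then the set $E\subseteq G$ of players of $G$ that are essential for $G$ being critical is again critical wrt $S$ (this is the previous Proposition applied to the deletion of the non-essential players of $G$), still contains $i$, and is itself essential. Indeed, for $j\in E$ one has $E\setminus\{j\}\subseteq G\setminus\{j\}$, and $G\setminus\{j\}$ is not critical because $j$ is essential in $G$; since subsets of non-critical coalitions are non-critical (the contrapositive of the ``every superset of a critical coalition is critical'' part of the previous Proposition), $E\setminus\{j\}$ is not critical, i.e.\ $j$ is essential in $E$. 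Consequently, whenever a critical coalition of size $\kappa$ contains $i$ with $i$ essential in it, there is a coalition in $\mathcal{G}^e(S)$ of size at most $\kappa$ containing $i$.

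For (a)$\Leftrightarrow$(b): unwinding Definitions~\ref{def:dcritical} and~\ref{def:essential critical}, the $g$-rank of $i$ is the least cardinality of a member of $\mathcal{G}^e(S)$ containing $i$, whereas the minimal $e$-rank of $i$ is the least cardinality of a critical coalition containing $i$ in which $i$ is essential. Every essential coalition is a critical coalition in which every player --- in particular $i$ --- is essential, so the minimal $e$-rank is at most the $g$-rank; the pruning lemma supplies the reverse inequality. Hence the two quantities coincide (with the understanding that neither is defined exactly when no suitable coalition exists), which is precisely (a)$\Leftrightarrow$(b).

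For (a)$\Leftrightarrow$(c): by the scenario analysis following Definition~\ref{def:essential critical}, an essential coalition containing $i$ can exist only when either $S\notin\wc$ and $i\notin S$ (and then $G\subseteq S^c$) or $S\in\wc$ and $i\in S$ (and then $G\subseteq S$); in the two remaining cases (a) fails for every $\kappa$, and a direct inspection of Definition~\ref{def:orderneg} shows (c) fails too, since the gap in \eqref{critical_gap} vanishes unless $S\in\wc$ and $i\in S$, and the gap in \eqref{critical_gap_pos} vanishes unless $S\notin\wc$ and $i\notin S$. In the inside scenario ($S\in\wc$, $i\in S$), given $G\in\mathcal{G}^e(S)$ with $i\in G$ (so $G\subseteq S$), set $K:=G\setminus\{i\}\subseteq S\setminus\{i\}$: then $S\setminus(K\cup\{i\})=S\setminus G\notin\wc$ because $G$ is critical, and $S\setminus K\in\wc$ because $i$ being essential in $G$ forces $G\setminus\{i\}=K$ to be non-critical while $S\cup K=S\in\wc$; thus $K$ realises \eqref{critical_gap}, so $\rho_i^-\leq|G|$. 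Conversely, from any $K\subseteq S\setminus\{i\}$ realising \eqref{critical_gap}, the coalition $G:=K\cup\{i\}\subseteq S$ is critical wrt $S$ (since $S\cup G=S\in\wc$ and $S\setminus G=S\setminus(K\cup\{i\})\notin\wc$) and $i$ is essential in it (since $S\setminus K\in\wc$ makes $K=G\setminus\{i\}$ non-critical), so the pruning lemma yields a member of $\mathcal{G}^e(S)$ of size at most $|K|+1$ containing $i$. Minimising over $G$ and over $K$ in these two implications gives that the $g$-rank of $i$ equals $\rho_i^-$, which is (a)$\Leftrightarrow$(c) in this scenario; the outside scenario is handled identically after replacing $S$ by $S^c$, set differences by unions, $\rho_i^-$ by $\rho_i^+$, and \eqref{critical_gap} by \eqref{critical_gap_pos}.

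The one genuinely delicate step is the pruning lemma: the previous Proposition only asserts that deleting the non-essential players keeps the coalition \emph{critical}, and one must argue separately --- via monotonicity of criticality under taking supersets --- that the residual coalition is also \emph{essential}, i.e.\ that no player essential in $G$ becomes redundant once the non-essential players are removed. Everything else is routine bookkeeping over the two scenarios.
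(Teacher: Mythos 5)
Your overall architecture (two inequalities for each equivalence, plus the scenario analysis) is sound and close in spirit to the paper's, but the pruning lemma on which everything rests is false as stated, and this is a genuine gap rather than a presentational one. Take $N=\{1,2,3,4\}$ with $\wcm=\{\{1,4\},\{2,3,4\}\}$, $S=N$ and $G=\{1,2,3\}$. Then $G$ is critical wrt $S$ (since $S\setminus G=\{4\}\notin\wc$), and only player $1$ is essential in $G$: removing $2$ or $3$ leaves the critical coalitions $\{1,3\}$ and $\{1,2\}$, whereas $\{2,3\}$ is not critical because $S\setminus\{2,3\}=\{1,4\}\in\wc$. So here $E=\{1\}$, yet $E$ is \emph{not} critical, since $S\setminus\{1\}=\{2,3,4\}\in\wc$; in particular $E\notin\mathcal{G}^e(S)$. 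The reason your justification fails is that the earlier Proposition you cite is only proved (and only safely true) for the deletion of \emph{one} non-essential player: after that deletion the essentiality pattern inside the coalition changes, so deleting all of $nE$ in one shot is not covered, and the example shows it can destroy criticality. Your added argument that every $j\in E$ is essential \emph{in} $E$ is correct but conditional on $E$ being critical, so it does not repair the lemma. Since both the reverse inequality in (a)$\Leftrightarrow$(b) and the direction c)$\Rightarrow$a) are routed through this lemma, the proof as written does not stand.

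The fix is cheap and uses only ingredients you already have: prune \emph{iteratively}. If the current coalition has a non-essential player $j$ (necessarily $j\neq i$), remove it; the resulting coalition is still critical by definition of non-essentiality, and $i$ remains essential in it because its complement inside the new coalition is a subset of the non-critical coalition $G\setminus\{i\}$ (your subset-of-a-non-critical-set argument). The process terminates in a member of $\mathcal{G}^e(S)$ containing $i$ of cardinality at most $|G|$ --- in the example above it lands on $\{1,3\}$ or $\{1,2\}$, not on $E=\{1\}$. Alternatively, argue as the paper does: take a critical coalition containing $i$ with $i$ essential of \emph{minimum} cardinality (or a minimum-cardinality $K$ realising \eqref{critical_gap} or \eqref{critical_gap_pos}) and use minimality to force every member to be essential. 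With either repair, your two-inequality bookkeeping for (a)$\Leftrightarrow$(b) and the scenario-by-scenario identification of the $g$-rank with $\rho^-_i$ and $\rho^+_i$ goes through.
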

	\begin{proof} Fix a coalition $S\subseteq N$.  We first observe that if player $i\in N$ is not $g$-critical of any rank then it is not essential. This implies that $v(S\setminus T)-v(S\setminus(T\cup\{i\})=0$ for all $T\subseteq S\setminus\{i\}$, and player $i$ is not inside or outside critical of any order. The converse is straightforward.\\
        $a)\Leftrightarrow b)$  Suppose that player $i$ is $e$-critical with minimal rank $\kappa$ by $G$, then there is no critical coalition $G'\subset N$, with $i\in G'$ and $|G'|<\kappa$. By minimality, any player $j\in G$ is essential to $G$ to be critical then player $i$ is $g$-critical with rank $\kappa$. \\
		  $a) \Rightarrow c)$ Let player  $i$ be $g$-critical of rank $\kappa$.  If $S \notin \wc$ then $i \notin S$ and there is an essential critical coalition $G\subseteq S^c$ (wrt $S$) with $i\in G$  such that  $|G|=\kappa$. Define $K=G\setminus\{i\}$, then equation \eqref{critical_gap_pos} holds and $\rho^+\leq \kappa$. If $\rho^+< \kappa $, by the minimality, there is an essential coalition $K'\subset S^c$ with $|K'|<|K|$ such that equation \eqref{critical_gap_pos} is satisfied implying that player $i$ is $g$-critical of rank $\kappa'+1<\kappa$. If $S \in \wc$ then $i \in S$ and the proof follows the same argument.\\
        $c)\Rightarrow a)$ Suppose that player $i$ is inside critical of order $\kappa$, then $i\in S$ and there is a coalition $K\subseteq S\setminus\{i\}$, $|K|=\kappa -1$, such that equation \eqref{critical_gap} is satisfied. By the minimality of $K$, coalition $G=K\cup\{i\}$  is essential critical wrt $S$, then player $i$ is $g$-critical of rank $\kappa$. If  player $i$ is outside critical of order $\kappa$ the proof is analogous and left to the reader.
        \end{proof}

    We conclude this section focusing on a particular subset of essential coalitions for $S\subseteq N$.
    
	\begin{definition}\label{def:mcritical}
		A coalition $G$ is \emph{essential minimal} for $S$ if $G=\arg\min\{|G'|: G'\in\mathcal{G}^e(S)\}$. Denote the family of all minimal essential coalitions for $S$ as $\mathcal{G}^e_{m}(S)$. A player $i$ is \emph{essential minimal critical}, or simply \emph{$m$-critical}, of rank $\kappa$ if there is $G\in\mathcal{G}_m^e(S)$ such that $i\in G$ and $|G|=\kappa$. If a player does not belong to any minimal essential coalition wrt $S$, it is called \emph{non-minimal}.
	\end{definition}
   Each player belonging to the same essential minimal coalition has the same \emph{$m$-critical} rank. On the other hand two players belonging to an essential coalition may have two different $g$-critical ranks and moreover a $g$-critical player may fail to be $m$-critical.

      \begin{example}
        Consider $\wc=\{\{1,2,3\},\{1,2,4,5\}\}$ and take $S=\{1\}$. Coalition $G=\{2,4,5\}$ is essential, but while  players $4$ and $5$ are $g$-critical of rank 3, player $2$ is $g$-critical of rank $2$ through coalition $G'=\{2,3\}\in\mathcal{G}^e_{m}(S)$. Players $4$ and $5$ are not $m$-critical of any rank and players $2$ and $3$ are $m$-critical of rank $2$.
      \end{example}
   
	\section{Computing group criticality}\label{sec:compute}
	We now focus on some computational techniques to derive the rank of criticality of a player according to the old and new definitions that we have reviewed so far. We begin with the newly introduced notion  and show a straightforward method to derive the rank of $g$-criticality wrt to any coalition from the collection of minimal winning coalitions $\mathcal{W}_{\min}$ or the dual notions of minimal blocking ones. A similar procedure had already been used to compute the order of inside criticality in Aleandri et al.\ \cite{adfm21}.  Given $\mathcal{C} \subseteq 2^N$ and $S \subseteq N$, we define $\mathcal{C} \setminus S =\left\{ C \setminus S:C \in \mathcal{C} \right\}$, and, for any player $i$, $\left(\mathcal{C}\right)_i= \left\{ C: i \in C \in \mathcal{C}  \right\} $. As acknowledged by Beisbart \cite{b10} (p.478), a similar straightforward relationship between $d$-criticality and minimal winning coalitions (or minimal blocking ones $\mathcal{B}_{\min}$) cannot be found\footnote{Beisbart actually mentions the collection of maximal losing coalitions, which are the complements, set by set, of the minimal blocking coalitions.}.

	\begin{proposition}\label{prop:calculation-g}
		Take $S \notin \mathcal{W}$, then player $i$ is $g$-critical if $\left(\mathcal{W}_{\min} \setminus S  \right)_i$ is non-empty with the rank given by the minimal cardinality of the sets in $\left(\mathcal{W}_{\min} \setminus S  \right)_i$. Taking $S \in \mathcal{W}$, player $i$ is $g$-critical if $\left(\mathcal{B}_{\min} \setminus S^c  \right)_i$ is non-empty with the rank given by the minimal cardinality of the sets in $\left(\mathcal{B}_{\min} \setminus S^c  \right)_i$.
	\end{proposition}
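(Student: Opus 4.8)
The plan is to reduce the statement to the equivalence proved just above — which turns $g$-criticality of rank $\kappa$ into outside criticality of order $\kappa$ when $S$ is losing, and into inside criticality of order $\kappa$ when $S$ is winning — together with the description of essential coalitions recorded before it. I would carry out the case $S\notin\wc$ in full and obtain the case $S\in\wc$ by a dual argument in which ``winning coalition'' is replaced by ``blocking coalition'' and $\wcm$ by $\mathcal{B}_{\min}$.

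For $S\notin\wc$, the first step is to pin down the essential critical coalitions. As noted after Definition \ref{def:essential critical}, such a $G$ must lie in $S^c$, and then $S\setminus G=S\notin\wc$ holds automatically, so criticality of $G$ wrt $S$ is simply $v(S\cup G)=1$. Using the monotonicity of $v$ (this is the content of the first Proposition of the previous section), $G\subseteq S^c$ is essential critical wrt $S$ exactly when it is inclusion-minimal among the sets $T\subseteq S^c$ with $v(S\cup T)=1$: if some $j\in G$ were non-essential then $v\bigl(S\cup(G\setminus\{j\})\bigr)=1$ and $G$ would not be minimal, while minimality forces $v\bigl(S\cup(G\setminus\{j\})\bigr)=0$ for every $j\in G$, i.e.\ every member of $G$ is essential. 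Hence $i$ is $g$-critical wrt $S$ iff it belongs to one of these inclusion-minimal sets, and its rank is the least of their cardinalities.

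Then I would prove the set identity: the inclusion-minimal subsets $G\subseteq S^c$ with $v(S\cup G)=1$ are precisely the inclusion-minimal members of $\wcm\setminus S$. For one inclusion, given such a $G$ I would pick $W\in\wcm$ with $W\subseteq S\cup G$; then $W\setminus S\subseteq G$ and $v\bigl(S\cup(W\setminus S)\bigr)=v(S\cup W)=1$, so minimality of $G$ gives $G=W\setminus S\in\wcm\setminus S$, and this set is inclusion-minimal inside $\wcm\setminus S$ because a strictly smaller member of $\wcm\setminus S$ would again be a subset of $S^c$ completing $S$ to a win. For the other inclusion, if $D=W\setminus S$ is inclusion-minimal in $\wcm\setminus S$ then $v(S\cup D)=1$, and any $T\subsetneq D$ with $v(S\cup T)=1$ contains some $W'\in\wcm$ with $W'\setminus S\subseteq T\subsetneq D$, contradicting the minimality of $D$. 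Reading off cardinalities yields the stated criterion and rank. For $S\in\wc$ the argument is the same once one observes that an essential critical $G$ now lies in $S$, that criticality becomes $v(S\setminus G)=0$, and that this says exactly that $S^c\cup G$ is a blocking coalition; the two steps then identify the essential critical coalitions with the inclusion-minimal members of $\mathcal{B}_{\min}\setminus S^c$, whence the rank of a $g$-critical $i$.

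The main obstacle I expect is the inclusion-minimality bookkeeping in the last step: a minimal winning coalition $W$ containing $i$ can have $W\setminus S$ strictly containing some other $W'\setminus S$ that does not contain $i$, so one must first pass to the inclusion-minimal members of $\wcm\setminus S$ — equivalently, to the minimal winning coalitions of the game obtained by contracting $S$ — before selecting the minimal-cardinality set through $i$. Keeping track of this reduction (and of its dual for $\mathcal{B}_{\min}\setminus S^c$) is precisely what makes the cardinality count in the statement coincide with the true $g$-rank; everything else is a routine consequence of the equivalence proposition and the monotonicity of $v$.
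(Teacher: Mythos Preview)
Your characterisation of the essential critical coalitions for a losing $S$ as precisely the inclusion-minimal members of $\mathcal{W}_{\min}\setminus S$ is correct and cleaner than the paper's direct argument. The gap is in your final step, where you assert that ``reading off cardinalities yields the stated criterion and rank''. What you have actually established is that $i$ is $g$-critical iff $i$ lies in some \emph{inclusion-minimal} member of $\mathcal{W}_{\min}\setminus S$, with rank the least cardinality among such members containing $i$. The proposition, by contrast, claims the rank is the least cardinality over \emph{all} of $(\mathcal{W}_{\min}\setminus S)_i$, and these two quantities differ in general. Take $N=\{1,\dots,6\}$, $\mathcal{W}_{\min}=\{\{1,2\},\{1,3,4\},\{3,4,5,6\}\}$ and $S=\{2\}$: then $\mathcal{W}_{\min}\setminus S=\{\{1\},\{1,3,4\},\{3,4,5,6\}\}$, and for $i=3$ the proposition would give rank $3$ via $\{1,3,4\}$, yet $\{1,3,4\}$ is not essential (removing $3$ leaves $S\cup\{1,4\}=\{1,2,4\}\supseteq\{1,2\}$ winning) and the only essential critical coalition through $3$ is $\{3,4,5,6\}$, so the true $g$-rank is $4$. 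Even more starkly, with $\mathcal{W}_{\min}=\{\{1,2,3\},\{1,2,4\}\}$ and $S=\{4\}$, player $3$ has $(\mathcal{W}_{\min}\setminus S)_3=\{\{1,2,3\}\}$ non-empty but is not $g$-critical of any rank.

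You flag exactly this danger in your last paragraph --- ``one must first pass to the inclusion-minimal members \ldots before selecting the minimal-cardinality set through $i$'' --- and then claim it ``makes the cardinality count in the statement coincide with the true $g$-rank''. It does not: passing to inclusion-minimal members can discard every set through $i$, or retain only sets larger than the smallest one in $(\mathcal{W}_{\min}\setminus S)_i$. The paper's own proof has the same blind spot: its contradiction step concludes $T\setminus\{j\}\in(\mathcal{W}_{\min}\setminus S)_i$, which silently assumes $j\neq i$, and fails in precisely the examples above. Your argument therefore proves the correct statement (with $(\mathcal{W}_{\min}\setminus S)_i$ replaced by the inclusion-minimal members of $\mathcal{W}_{\min}\setminus S$ that contain $i$), but not the proposition as written.
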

	\begin{proof}
		Take $S \notin \mathcal{W}$. Using hypothesis $T\in\left(\mathcal{W}_{\min} \setminus S  \right)_i$ with minimal cardinality, we show that $T$ is an essential critical coalition wrt $S$. By construction, $T$ is critical wrt $S$. Because of the criticality of $T$ there is $S_T\subseteq S$, such that $T\cup S_T\in\mathcal{W}_{\min}$. Now suppose that $T$ is not essential, then there is a $j\in T$ such that $T\setminus\{j\}$ is critical wrt $S$. This implies that $(T\setminus\{j\})\cup S_T\cup(S\setminus S_T)\in\mathcal{W}$. We know that $(T\setminus\{j\})\cup S_T\notin\mathcal{W}_{\min}$, then there is $A\subseteq (S\setminus S_T)$ such that $(T\setminus\{j\})\cup S_T\cup A\in\mathcal{W}_{\min}$, but then we obtain $T\setminus\{j\} \in \left(\mathcal{W}_{\min} \setminus S  \right)_i$ that contradicts the minimality of $T$.\\
		An analogous argument proves the second statement of the proposition.
	\end{proof}

	The following result clarifies the importance of $m$-critical players in defining the ranks of $d$-criticality. Players that are $m$-critical of a certain rank are $d$-critical and $g$-critical of the same rank. All the other players become $d$-critical of the rank immediately above, while there is no such direct link with $g$-criticality. Therefore, $d$-criticality is completely determined by $m$-criticality. 
	
	\begin{proposition}\label{prop:m+1=d}
		Given a coalition $S \subseteq N$,
		\begin{enumerate}[i.]
			\item if player $i \in N$ is $m$-critical wrt $S$ with rank $\kappa$, then $i$ is $d$-critical and $g$-critical of the same rank;
			\item otherwise, player $i$ is $d$-critical of rank $\kappa^m(S)+1$, where $\kappa^m(S)$ is the minimal size of the essential minimal critical coalitions for $S$, in formula
            \begin{equation}
		          \label{def:mS}
		          \kappa^m(S)= \min\{|G'|: G'\in\mathcal{G}^e(S)\}.
	            \end{equation}
		\end{enumerate}
	\end{proposition}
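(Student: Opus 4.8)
The plan is to first sharpen the Proposition that precedes Definition~\ref{def:mcritical} (the one stating that deleting the non-essential players from a critical coalition leaves a critical coalition, and that every superset of a critical coalition is critical). I claim that if $G$ is critical wrt $S$ and $E\subseteq G$ is the set of players that are essential for $G$ being critical wrt $S$, then $E$ is itself an \emph{essential} critical coalition wrt $S$. Criticality of $E$ is exactly the cited Proposition; and if some $j\in E$ failed to be essential for $E$, then $E\setminus\{j\}$ would be critical wrt $S$, hence so would its superset $G\setminus\{j\}$ by upward closure, contradicting $j\in E$. From this I extract the single fact that drives the whole argument: every critical coalition wrt $S$ has cardinality at least $\kappa^m(S)$, and a critical coalition of cardinality exactly $\kappa^m(S)$ is necessarily a minimal essential critical coalition for $S$ (its essential core $E$ then satisfies $\kappa^m(S)\le|E|\le|G|=\kappa^m(S)$, so $E=G$, whence $G\in\mathcal{G}^e_m(S)$). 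I also note that $\mathcal{G}^e_m(S)\neq\varnothing$: since $v(N)=1$ and $v(\varnothing)=0$, the grand coalition $N$ is critical wrt every $S$, so its essential core lies in $\mathcal{G}^e(S)$.

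For part (i), assume $i$ is $m$-critical of rank $\kappa$. Then $\kappa=\kappa^m(S)$, because every coalition in $\mathcal{G}^e_m(S)$ has cardinality $\kappa^m(S)$, and $i$ belongs to some $G\in\mathcal{G}^e_m(S)$ with $|G|=\kappa$. As $G$ is critical and contains $i$, the $d$-rank of $i$ is at most $\kappa$; by the boxed fact no critical coalition is smaller than $\kappa^m(S)=\kappa$, so the $d$-rank of $i$ equals $\kappa$. The same $G$ witnesses that the $g$-rank of $i$ is at most $\kappa$, and since $\mathcal{G}^e(S)$ contains no coalition of size $<\kappa^m(S)$, the $g$-rank of $i$ is also exactly $\kappa$.

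For part (ii), assume $i$ is non-minimal. By the boxed fact the $d$-rank of $i$ is at least $\kappa^m(S)$; it cannot equal $\kappa^m(S)$, for a critical coalition of that size containing $i$ would, again by the boxed fact, be a minimal essential critical coalition containing $i$, forcing $i$ to be $m$-critical. Hence the $d$-rank of $i$ is at least $\kappa^m(S)+1$. For the matching upper bound, fix any $G\in\mathcal{G}^e_m(S)$ (nonempty by the above); since $i$ is not $m$-critical we have $i\notin G$, so $G\cup\{i\}$ has cardinality $\kappa^m(S)+1$, contains $i$, and is critical wrt $S$ by upward closure of criticality. Therefore the $d$-rank of $i$ equals $\kappa^m(S)+1$, as claimed.

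The only step that is not routine cardinality bookkeeping is the strengthening in the first paragraph — that the essential core of a critical coalition is again essential — and this is a one-line consequence of the upward closure of criticality already proved in the excerpt; everything afterwards is a matter of comparing coalition sizes with $\kappa^m(S)$ defined in \eqref{def:mS}.
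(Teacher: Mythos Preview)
Your proof is correct and follows essentially the same approach as the paper's: both use that every critical coalition contains an essential subcoalition to obtain the lower bound $\kappa^m(S)$ on the $d$-rank, and both adjoin the non-minimal player to some $G\in\mathcal{G}^e_m(S)$ via upward closure of criticality for the matching upper bound in part~(ii). Your version is more explicit in isolating and proving the auxiliary fact that the essential core of a critical coalition is itself essential, which the paper invokes only implicitly, but the underlying ideas coincide.
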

	\begin{proof}
		Denote as $\kappa_i^d(S)$ the rank for $d$-criticality of $i \in N$ wrt $S$ and suppose $i$ is $m$-critical with rank $k^m(S)$. Then the minimal essential coalition $\tilde{G}$ to which $i$ belongs is critical, therefore $\kappa_i^d(S) \leq k^m(S)$. The inequality cannot be strict, otherwise we would be able to extract  a minimal essential coalition  with a smaller cardinality than $k^m(S)$ from the critical coalition that defines the rank of $d$-criticality. Since $\tilde{G}$ is essential, a similar argument shows that player $i$ is also $g$-critical of the same rank.
		
		For any other player $j \in N$, consider the coalition $\hat{G}=G_m \cup\{i\}$ with $G_m \in \mathcal{G}^e_m(S)$. Now $\hat{G}$ is critical for $S$, with $|\hat{G}|=k^m(S)+1$. No smaller coalition containing $j$ does the same, otherwise the coalition would be minimal and we would fall under the previous case. Player $j$ is therefore $d$-critical of rank $k^m(S)+1$.
	\end{proof}
	
	The previous result shows that some players become $d$-critical without being essential for the corresponding critical coalition, gaining a rank that does not reflect the real impact on coalition formation.\\
	
	A player $i\in N$ is a \emph{free rider} for $d$-criticality if either $i$ is not $g$-critical of any rank or its rank for $g$-critical is higher than that of $d$-criticality.

	\begin{example}\label{ex:mdgranks}
		Consider 8 players and $\mathcal{W}_{\min}=\{\{1,2,3\},\{3,4,5\},$ $ \{4,5,6,7\}\}$ then the set of blocking coalitions is $\mathcal{B}_{\min}=\{\{1,4\},\{1,5\},\{2,4\},\{2,5\},\{3,6\},$ $\{3,7\}\}$. Take the losing coalition $S=\{1\}$ then $\mathcal{W}_{\min} \setminus S = \{\{2,3\},\{3,4,5\},$  $\{4,5,6,7\}\}$. By proposition \ref{prop:calculation-g} players 2 and 3 are $g$-critical of rank 2, players 4 and 5 are $g$-critical of rank 3 and players 6 and 7 are $g$-critical of rank 4. Player 8, being null and inessential for any coalition, is not $g$-critical of any rank.
  
        Here $\{2,3\}$ is the only minimal essential coalition so 2 and 3 are also $m$-critical and $d$-critical of rank 2. Players 4,5 are $d$-critical of rank 3. Finally, players 6, 7 and 8 too are $d$-critical of rank 3, so they are free riders.
		
		Note player 4 may form an essential coalition with 3 and 5 or may join the minimal essential coalition $\{2,3\}$, but this will not lower their rank. A similar argument occurs for player 5.
  
        Take the winning coalition $S=\{3,4,5,6,7\}$ then $\mathcal{B}_{\min} \setminus S^c=\{\{4\},\{5\},\{3,6\},$ $\{3,7\}\}$. By proposition \ref{prop:calculation-g} players 4 and 5 are $g$-critical of rank 1, while players 3, 6 and 7 are $g$-critical of rank 2. Again, the null player 8 is not $g$-critical of nay rank.
	\end{example}
	
	The previous example should convince the reader that free riding occurs when the rank for $g$-criticality is at least two units above the minimal rank of criticality and that null players are always free riders, but they are not the only ones.

	\section{A comparison of the properties}
	We now test the different notions of criticality according to several features.
 
	\subsection{Monotonicity}
	We base our analysis on an extension to rankings of the notion of monotonicity introduced by Young, \cite{y85}, which relies on criticality of the first order.
	
	\begin{definition}
		Let $\mathcal{T}$ be a class of cooperative games $(N,v)$ and let $\succeq$ be a preorder on a finite dimensional Euclidean space. An allocation procedure, that maps the class of games $\mathcal{T}$ onto the Euclidean space, satisfies the \emph{strong monotonicity} on $\mathcal{T}$ wrt the preorder $\succeq$ if, for any two games $v$, $w$ and any player $i \in N$, $ w_i'(S) \geq v_i'(S)$ for all  $S$ implies $\phi_i(w) \succeq \phi_i(v)$.
	\end{definition}
	\noindent
	Let $\mathcal{P}^N$ be the set of probability distributions on the power set of $N$. Take $p\in\mathcal{P}^N$, with $p(S)$ denoting the probability of coalition $S \subseteq N$ forming and new voting powers are introduced.
	\begin{definition}
 \label{def:powerorder}
		The {\em essential minimal measure} of voting power of rank $\kappa$ for player $i \in N$, denoted $\beta^{m,\kappa}_{i}$ is the probability of $i$ of being $m$-critical of rank $\kappa$ wrt to a random coalition $S \subseteq N$ originating from probability $p$.\\
		By replacing $m$-criticality with $g$-criticality we obtain the {\em group essential measure} of voting power of rank $\kappa$, $\beta^{g,\kappa}_{i}$,  while by replacing it with $d$-criticality, we obtain the {\em differential measure} of criticality, $\beta^{d,\kappa}_{i}$, always of rank $\kappa$.
	\end{definition}
	
	In Beisbart \cite{b10} the index $\beta^{d,\kappa}_{i}$ was simply referred to as the measure of voting power of rank $\kappa$. We add the term "differential" to distinguish it from the other values defined in the present work. \\

	Given a simple game $(N,v)$, we define, for each player $i\in N$,  the sequence of indices 
	$$\boldsymbol{\beta}^x_i(v):=(\beta^{x,1}_{i}(v),\ldots,\beta^{x,n}_{i}(v) ),$$ where $x=m$, $x=d$ or $x=g$. For our purposes, we are interested in a global index that includes the whole sequence of ranks. We consider $\pi_i^{x,\kappa} $, the probability of being $x$-critical up to rank $\kappa$,
	\[
	\pi_i^{x,\kappa}(v) = \sum_{\ell=1}^{\kappa} \beta^{x,\ell}_{i}(v) \; .
	\]
	\begin{definition}
		Let $v,w$ be simple games. For a given player $i \in N$, the vector of average ranks $\boldsymbol{\beta}^x_i(w)$   dominates $\boldsymbol{\beta}^x_i(v)$ in \emph{first-order stochastic dominance}, and we write $\boldsymbol{\beta}^x_i(w) \succeq_{\mathrm{1sd}} \boldsymbol{\beta}^x_i(v)$, whenever $  \pi_i^{x,\kappa}(w) \geq  \pi_i^{x,\kappa}(v) $ for every $ \kappa \in N$.
	\end{definition}
	

	We now examine Young's strong monotonicity for the different notions of criticality that we are comparing. We start with $g$-criticality.
	
	\begin{lemma}
		\label{lem:gmonotone}
		Given two characteristic functions $v$ and $w$ suppose that $ w_i'(S) \geq v_i'(S)$ for all $ S \subseteq N $ and some $ i \in N$. If $i$ is $g$-critical of rank $ \kappa$ wrt to a coalition in the game $v$, then $i$ is $g$-critical wrt to the same coalition in the game $w$ with a rank smaller than or equal to $ \kappa$.
	\end{lemma}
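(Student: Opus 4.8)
The plan is to unwind the definition of $g$-criticality via the equivalence established in the earlier Proposition: being $g$-critical of rank $\kappa$ wrt $S$ is the same as being inside or outside critical of order $\kappa$, which in turn is witnessed by a coalition $K$ of cardinality $\kappa-1$ satisfying the marginal-gain equation \eqref{critical_gap} or \eqref{critical_gap_pos}. So I would start by fixing the coalition $S$ for which $i$ is $g$-critical of rank $\kappa$ in the game $v$, and split into the two scenarios identified after Definition \ref{def:essential critical}: either $S\notin\wc_v$ with $i\notin S$ (outside criticality), or $S\in\wc_v$ with $i\in S$ (inside criticality). In each case there is a witnessing set $K$ with $|K|=\kappa-1$ and $v_i'$ taking value $1$ on the relevant argument.

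Next I would translate the hypothesis $w_i'(S')\ge v_i'(S')$ for all $S'$ into a statement about the witness. In the outside case, the witness says $v\big(S\cup K\cup\{i\}\big)-v\big(S\cup K\big)=1$, i.e. $v_i'(S\cup K)=1$; since $w_i'(S\cup K)\ge v_i'(S\cup K)=1$ and the game is $\{0,1\}$-valued, we get $w_i'(S\cup K)=1$ as well, hence the same $K$ witnesses that $i$ is outside critical of \emph{some} order $\le\kappa$ in $w$ — the order in $w$ could only be smaller because a smaller witnessing set might exist in $w$. The inside case is symmetric, using $v_i'(S\setminus K)=1$. Then, invoking the Proposition again in the reverse direction, "outside (or inside) critical of order $\le\kappa$ wrt $S$ in $w$" is exactly "$g$-critical of rank $\le\kappa$ wrt $S$ in $w$," which is the conclusion.

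The one genuine subtlety — and the step I would treat most carefully — is the compatibility of the two scenarios with the passage from $v$ to $w$: I must make sure that the scenario label ("$S\in\wc$ and $i\in S$" versus "$S\notin\wc$ and $i\notin S$") is stable enough for the argument. The membership conditions $i\in S$ or $i\notin S$ do not depend on the game at all, so they carry over verbatim. For the winning/losing status of $S$, the monotonicity hypothesis is exactly what rescues us: in the inside scenario $S\in\wc_v$, and one checks that monotonicity of $w_i'$ together with $w$ being monotone forces $S\in\wc_w$ too (or, more directly, the witness equation $v(S\setminus K)=1$ combined with $w\ge v$ pointwise — which follows from $w_i'\ge v_i'$ applied iteratively along with $w(N)=v(N)=1$, or can simply be assumed from the modelling — gives $w(S\setminus K)=1$, so $S\supseteq S\setminus K$ is winning in $w$). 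Conversely in the outside scenario we need $S\notin\wc_w$; here I would note that if $S$ were winning in $w$, then $i\in S^c$ could not be outside critical wrt $S$ in $w$, but that is fine because we only need to exhibit \emph{some} witness of rank $\le\kappa$, and the direct argument via the witness set $K$ above already produces $g$-criticality in $w$ without needing to re-derive the scenario label. In writing up I would lean on the latter, witness-based route, since it sidesteps any delicate claim about whether $w\ge v$ holds globally and keeps the proof to the two short symmetric paragraphs sketched above.
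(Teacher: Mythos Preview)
Your approach is correct and at its core coincides with the paper's: both arguments take the witnessing object in $v$ (the paper calls it the essential coalition $G^T$; you call it $K=G^T\setminus\{i\}$), observe that the hypothesis $w_i'\ge v_i'$ forces the same marginal gap to hold in $w$, and conclude that $i$ is $g$-critical of rank at most $\kappa$ in $w$ because some members of the witness may become redundant. The only difference is packaging: the paper stays with the essential-coalition language of Definition~\ref{def:essential critical}, while you route through the equivalence with inside/outside criticality (Proposition~3.3) and the witness equations \eqref{critical_gap}--\eqref{critical_gap_pos}. Neither route buys anything the other does not.

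One comment on your ``subtlety'' paragraph: you do not need to sidestep the scenario label, and you certainly do not need $w\ge v$ pointwise. The witness argument itself settles it. In the outside case, $w_i'(S\cup K)=1$ gives $w(S\cup K)=0$, and since $S\subseteq S\cup K$ and $w$ is monotone, $w(S)=0$; so $S\notin\wc_w$ and the outside-criticality definition applies verbatim in $w$. Symmetrically, in the inside case $w_i'(S\setminus K)=1$ gives $w(S\setminus K)=1$, hence $w(S)=1$ by monotonicity, and $S\in\wc_w$. This one line closes the gap you flagged; the paper's own proof is silent on this point as well, so your more explicit treatment is in fact an improvement in rigor once you add it.
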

	\begin{proof}
		Denote as $G^T$ the essential coalition of minimal cardinality that defines the rank of $g$-criticality of $i$ wrt to $T$ in the game $v$. Since $w^i(S) \geq v^i(S) $, player $i$ remains essential for the criticality of $G^T$ wrt to $T$ under $w$, while other players may lose their essentiality.	The rank in $w$ may therefore remain the same or become lower.
	\end{proof}

	The strong monotonicity of $g$-criticality  follows.
	
	\begin{proposition}
		\label{prop:gmonotone}
		$g$-criticality satisfies strong monotonicity on the class of simple games  under the first-order stochastic criterion.
	\end{proposition}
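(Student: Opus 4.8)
The plan is to deduce Proposition \ref{prop:gmonotone} directly from Lemma \ref{lem:gmonotone} by translating the rank-wise inequality into a statement about the cumulative probabilities $\pi_i^{g,\kappa}$. Fix a player $i\in N$ and two simple games $v,w$ with $w_i'(S)\ge v_i'(S)$ for all $S\subseteq N$. We must show $\boldsymbol{\beta}^g_i(w)\succeq_{\mathrm{1sd}}\boldsymbol{\beta}^g_i(v)$, i.e. $\pi_i^{g,\kappa}(w)\ge\pi_i^{g,\kappa}(v)$ for every $\kappa$.

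First I would unwind the definitions: for a fixed coalition $S$, Lemma \ref{lem:gmonotone} tells us that if $i$ is $g$-critical of rank $\kappa$ wrt $S$ in $v$, then $i$ is $g$-critical of some rank $\kappa'\le\kappa$ wrt $S$ in $w$. In particular, the event ``$i$ is $g$-critical of rank at most $\kappa$ wrt $S$'' in the game $v$ is contained in the corresponding event in the game $w$. The one point needing a word of care is the case where $i$ is not $g$-critical of any rank wrt $S$ in $v$ — then the left-hand event simply does not occur, so the inclusion is trivially fine; and conversely $i$ may acquire $g$-criticality under $w$, which only helps the inequality.

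Next I would pass to probabilities under the fixed distribution $p\in\mathcal{P}^N$. Writing, for each $S$, $\mathbb{1}\{i \text{ is $g$-critical of rank}\le\kappa \text{ wrt } S \text{ in } u\}$ for the indicator, the set inclusion above gives, pointwise in $S$,
\[
\mathbb{1}\{i \text{ $g$-critical of rank}\le\kappa \text{ wrt } S \text{ in } v\}\ \le\ \mathbb{1}\{i \text{ $g$-critical of rank}\le\kappa \text{ wrt } S \text{ in } w\}.
\]
Summing against $p(S)$ over all $S\subseteq N$ yields exactly $\pi_i^{v,g,\kappa}\le\pi_i^{w,g,\kappa}$ — here I use that $\pi_i^{g,\kappa}=\sum_{\ell\le\kappa}\beta^{g,\ell}_i$ is by Definition \ref{def:powerorder} precisely the probability that $i$ is $g$-critical of some rank $\ell\le\kappa$ wrt a random $S$. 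Since $\kappa$ was arbitrary, $\boldsymbol{\beta}^g_i(w)\succeq_{\mathrm{1sd}}\boldsymbol{\beta}^g_i(v)$, which is the strong monotonicity of $g$-criticality under the first-order stochastic criterion.

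There is no real obstacle here: the substantive work is already done in Lemma \ref{lem:gmonotone}, and the only thing to be careful about is that first-order stochastic dominance is the right order in which to phrase the conclusion — a lower (better) rank shifts mass to earlier coordinates, which increases every prefix sum $\pi_i^{g,\kappa}$, matching the definition of $\succeq_{\mathrm{1sd}}$. I would also remark that the argument does not require the ranks to be compared coalition by coalition through a coupling beyond the deterministic pointwise domination already supplied by the lemma, so no independence or structural assumption on $p$ is needed.
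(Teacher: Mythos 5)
Your proof is correct and follows essentially the same route as the paper: both rest entirely on Lemma \ref{lem:gmonotone}, noting that each coalition $g$-critical for $v$ stays $g$-critical for $w$ at equal or lower rank (and new coalitions can only help), and then pass to the prefix sums $\pi_i^{g,\kappa}$. Your version merely spells out the indicator/summation step that the paper leaves implicit.
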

	\begin{proof}
		Suppose $ w_i'(S) \geq v_i'(S)$. By Lemma \ref{lem:gmonotone}, if $i$ is $g$-critical under $v$, it is also $g$-critical under $w$ wrt to same coalition, with equal or smaller rank.
		Some other coalitions that failed to be $g$-critical under $v$ may become critical under $w$, but this can only increase the partial cumulative values of $\pi_i^{g,\kappa}$. The first-order stochastic dominance of the average ranks vector of $w$ wrt to $v$ follows.  
	\end{proof}
	
	The following examples illustrate the proposition at work and, most importantly, show that neither  $d$-criticality nor $m$-criticality share the same property.

	\begin{example}
		\label{ex:faild}
		Consider two simple games $v$ and $w$ on $N=\{1,2,3,4\} $ with minimal winning coalition families given, respectively, by $ \mathcal{W}_{\mathrm{min}}^v = \{\{1,2,3 \},\{4\}\}$ and $\mathcal{W}_{\mathrm{min}}^{w} = \{\{1,2,3 \},\{1,2,4\}\}$. It can easily be verified that $w_1'(S) \geq v_1'(S) $ for all $S \subseteq N $, with strict majorization for the coalitions $\{2,4\} $, $\{1,2,4\} $, $\{2,3,4\} $ and $N$.
		The presence of criticality of player 1, together with the rank, are described for the two games in Table \ref{table:mon}.
		
		\begin{table}[ht]
			\centering
			\begin{tabular}{||c|| c c|| c c ||} 
				\hline
				& \multicolumn{2}{|c|}{$d$-criticality} & \multicolumn{2}{|c|}{$g$-criticality}\\ \hline
				$S$ & $v$ & $w$ & $v$ & $w$ \\[0.5ex] 
				\hline\hline
				$\varnothing$ & 2 & \textcolor{red}{3} & 3 & 3 \\
				$\{1\}$ & 2 & \textcolor{red}{3} & \xmark & \xmark \\
				$\{2\}$ & 2 & 2 & 2 & 2 \\
				$\{3\}$ & 2 & 2 & 2 & 2 \\
				$\{4\}$ & 2 & 2 & \xmark & 2 \\
				$\{1,2\}$ & 2 & 2 & \xmark & \xmark \\
				$\{1,3\}$ & 2 & 2 & \xmark & \xmark \\ 
				$\{1,4\}$ & 2 & 2 & \xmark & \xmark \\
				$\{2,3\}$ & 1 & 1 & 1 & 1 \\
				$\{2,4\}$ & 2 & 1 & \xmark & 1 \\
				$\{3,4\}$ & 2 & 2 & \xmark & 2 \\
				$\{1,2,3\}$ & 1 & 1 & 1 & 1 \\
				$\{1,2,4\}$ & 2 & 1 & \xmark & 1 \\ 
				$\{1,3,4\}$ & 2 & 2 & \xmark & \xmark \\
				$\{2,3,4\}$ & 2 & 1 & \xmark & 1 \\
				$\{1,2,3,4\}$ & 2 & 1 & \xmark & 1 \\ 
				[1ex] 
				\hline
			\end{tabular}
			\caption{\label{table:mon} Comparing the ranks of group and differential criticality of player 1 for $v$ and $w$ in Example \ref{ex:faild}. Only the coalitions for which 1 is an essential critical player are listed.}
		\end{table}
		It turns out that $\beta^{g,1}_1(w)= \beta^{g,1}_1(v)+p(\{2,4\})+p(\{1,2,4\})+p(\{2,3,4\})+p(\{1,2,3,4\})$, $\beta^{g,2}_1(w)= \beta^{g,2}_1(v)+p(\{3,4\})+p(\{4\})$ and $\beta^{g,3}_1(w)= \beta^{g,3}_1(v)$. Therefore $\boldsymbol{\beta}_1^g(w) \succeq \boldsymbol{\beta}_1^g(v) $ for any $p \in \mathcal{P}^N $. Conversely, $\pi_1^{d,2}(v)=1 $ and $\pi_1^{d,2}(w)=1-p(\varnothing)-p(\{1\}) $. Therefore, $d$-criticality is not strongly monotone for every probability distribution $p \in \mathcal{P}^N$ such that either $p(\varnothing)>0$ or $p(\{1\})>0$ (or both).
	\end{example}

	\begin{example}
		\label{ex:failm}
		Now consider 5 players and the simple games $v$ and $w$ defined respectively by the minimal winning coalition sets  $ \mathcal{W}_{\mathrm{min}}^v = \{\{1,2,3,4 \}$, $\{2,3,4,5\}\}$ and $\mathcal{W}_{\mathrm{min}}^w = \{\{1,2,3 \},\{3,5\},\{2,4,5\}\}$. Again, $w_1'(S) \geq v_1'(S) $ for all $S \subseteq N $, with strict majorization for the coalitions $\{2,3\} $ and  $\{1,2,3\} $. The change in the ranks for both group and minimal criticality for player 1 are given  in Table \ref{table:mon2}.
		\begin{table}[ht]
			\centering
			\begin{tabular}{||c|| c c|| c c ||} 
				\hline \hline
				& \multicolumn{2}{|c||}{$g$-criticality} & \multicolumn{2}{|c||}{$m$-criticality}\\ \hline
				$S$ & $v$ & $w$ & $v$ & $w$ \\[0.5ex] 
				\hline\hline
				$\varnothing$ & 4 & 3 & 4 & \textcolor{red}{\xmark} \\
				$\{2\}$ & 3 & 2 & 3 & 2\\
				$\{3\}$ & 3 & 2 & 3 & \textcolor{red}{\xmark} \\
				$\{4\}$ & 3 & 3 & 3 & \textcolor{red}{\xmark} \\
				$\{2,3\}$ & 2 & 1 & 2 & 1 \\
				$\{2,4\}$ & 2 & 2 & 2 & \textcolor{red}{\xmark} \\
				$\{3,4\}$ & 2 & 2 & 2 &\textcolor{red}{\xmark} \\
				$\{1,2,3\}$ & \xmark & 1 & \xmark & 1 \\
				$\{2,3,4\}$ & 1 & 1 & 1 & 1 \\
				$\{1,2,3,4\}$ & 1 & 1 & 1 & 1 \\
				\hline \hline
			\end{tabular}
			\caption{\label{table:mon2} Comparing the ranks of group and minimal criticality of player 1 for $v$ and $w$ in Example \ref{ex:failm}. Only the coalitions for which 1 is an essential critical player are listed.}
		\end{table}
		
		All the coalitions which are group critical of some rank under $v$ remain critical under $w$, with the rank unaltered or decreased. Instead, some coalitions that where essential minimal critical under $v$, stop being critical under $w$. Any probability distribution assigning positive probability to those coalitions reveals that the notion of minimal criticality is non-monotone.
		
	\end{example}

	\begin{remark}
		In \cite{dfm19a}, a weighted sum of indices was considered
		and the notion of monotonicity introduced by Turnovec \cite{t98} for weighted voting games was proved for the inside order of criticality when all the coalitions are equally likely to occur. The assumption of equal probability was crucial for the result and  the use of weights, which might be considered arbitrary, make the present result more general.
		
		In \cite{adfm21}, another criterion is examined for single coalitions, the lexicographic order, defined as follows:  $\boldsymbol{\beta}^x_i(w) \succeq_{\mathrm{lex}} \boldsymbol{\beta}^x_i(v)$ if $\beta^{x,1}_{i}(w) > \beta^{x,1}_{i}(v) $ or, if $\beta^{x,1}_{i}(w) = \beta^{x,1}_{i}(v) $, when $\beta^{x,2}_{i}(w) > \beta^{x,2}_{i}(v) $. When both indices of rank 1 and 2 are equal, the third rank is compared, and so on. It is easy to verify that $\boldsymbol{\beta}^x_i(w) \succeq_{\mathrm{1sd}} \boldsymbol{\beta}^x_i(v)$ implies $\boldsymbol{\beta}^x_i(w) \succeq_{\mathrm{lex}} \boldsymbol{\beta}^x_i(v)$.
	\end{remark}	
	
	\subsection{Null players}
	Another problem with $d$-criticality comes from the fact that any player -- including null ones -- is critical of some rank. Therefore, power indices based on $d$-criticality will always assign non-null power to null players, as long as the index is based on a probability distribution that supports any coalition. This is not the case with $g$-criticality, since null players are always inessential.\\
	For a more rigorous analysis, we turn to the total probability of being critical according to the various criteria. 
	
	\begin{definition}
		We denote the probability of essential minimal (group essential, differential, resp.)   criticality for a player $i \in N$ as $\pi^x_i=\pi^{x,n}_i$  where $x=m$, $x=g$, $x=d$, resp. 
	\end{definition}
	
	\begin{proposition}\label{prop: relationPi}
		The following relationships hold for any $i \in N$,
		\begin{gather}
			\label{ineq_probcrit}
			\pi^m_i \leq \pi^g_i \leq \pi^d_i =1 \; .
		\end{gather}
	\end{proposition}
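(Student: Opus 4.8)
The plan is to establish the three relations in \eqref{ineq_probcrit} separately, working pointwise at the level of individual coalitions $S\subseteq N$ and then summing against the probability distribution $p$.

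First I would prove $\pi^d_i=1$. Fix any coalition $S$. If $S\notin\wc$, then $S\cup N=N\in\wc$ while $S\setminus N=\varnothing\notin\wc$, so $N$ is critical wrt $S$; if $S\in\wc$, then again $N$ is critical wrt $S$ provided $S\setminus N=\varnothing\notin\wc$, which holds by $v(\varnothing)=0$. In either case $i$ belongs to some critical coalition wrt $S$, so $i$ is $d$-critical of some rank $\kappa\le n$ wrt $S$. Hence $\sum_{\kappa=1}^n\beta^{d,\kappa}_i=\sum_S p(S)\cdot\mathbb{1}\{i\text{ is }d\text{-critical of some rank wrt }S\}=\sum_S p(S)=1$.

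Next, $\pi^g_i\le\pi^d_i$. It suffices to show that, for every $S$, if $i$ is $g$-critical of some rank wrt $S$ then $i$ is $d$-critical of some rank wrt $S$ (so the indicator for $g$-criticality is dominated by that for $d$-criticality, coalition by coalition, and summing against $p$ gives the inequality on the total probabilities). This is immediate: an essential critical coalition containing $i$ is in particular a critical coalition containing $i$, so witnessing $g$-criticality witnesses $d$-criticality as well. Symmetrically, $\pi^m_i\le\pi^g_i$ follows because a minimal essential coalition containing $i$ is an essential coalition containing $i$, so whenever $i$ is $m$-critical wrt $S$ it is $g$-critical wrt $S$; again sum the pointwise indicator inequality against $p$.

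I expect the only genuine subtlety to be the first part, ensuring that the grand coalition $N$ (or, more economically, the complement-type coalition used in Definition \ref{def:dcritical}) really is critical wrt every $S$, which hinges on $v(\varnothing)=0$ and $v(N)=1$ from the definition of a simple monotone game; the monotonicity of $v$ is not even needed here. The remaining two inequalities are purely set-theoretic containments between the families of witnessing coalitions ($\mathcal{G}^e_m(S)\subseteq\mathcal{G}^e(S)\subseteq\{G:G\text{ critical wrt }S\}$) transported through the probability $p$, so no real obstacle arises there.
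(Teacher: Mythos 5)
Your proof is correct and follows essentially the same route as the paper: pointwise containment of the criticality events (minimal essential $\subseteq$ essential $\subseteq$ critical) summed against $p$, plus the fact that every player is $d$-critical of some rank wrt every coalition. You are merely more explicit than the paper on the last point, exhibiting the grand coalition $N$ (using $v(N)=1$ and $v(\varnothing)=0$) as a critical coalition containing $i$ for any $S$, which is a welcome detail rather than a different argument.
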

	\begin{proof}
		To prove \eqref{ineq_probcrit}, we simply note that minimal essential criticality is only a part of both essential and differential criticality, and therefore $\pi^m_i \leq \pi^g_i$ and $\pi^m_i \leq \pi^d_i$. The second inequality holds because not all agents are essential critical of any rank. 
	\end{proof}
	
	\begin{proposition}
		The following relations hold for any $i\in N$:
		\begin{gather}
			\mbox{ Player i is null } \Leftrightarrow \pi_i^g=0 \Leftrightarrow \pi_i^m=0.
		\end{gather}
	\end{proposition}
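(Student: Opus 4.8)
The plan is to prove the two implications ``$i$ null $\Rightarrow \pi_i^g=0$'' and ``$\pi_i^m=0 \Rightarrow i$ null'', and then to close the cycle using $\pi_i^m\le\pi_i^g$ from Proposition~\ref{prop: relationPi}: this gives null $\Rightarrow \pi_i^g=0 \Rightarrow \pi_i^m=0 \Rightarrow$ null, so all three conditions are equivalent. As in the discussion preceding the statement, I take $p$ to give positive probability to every coalition, so that $\pi_i^x=0$ is equivalent to saying that $i$ is $x$-critical wrt no coalition at all.

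For the first implication I would argue by contradiction. If $i$ is $g$-critical of some rank wrt some $S$, there is an essential critical coalition $G\ni i$ wrt $S$, and by the dichotomy recorded right after Definition~\ref{def:essential critical} only two configurations occur. If $S\notin\wc$ and $i\notin S$, then $G\subseteq S^c$ and $S\setminus(G\setminus\{i\})=S\notin\wc$; since $i$ is essential, $G\setminus\{i\}$ fails to be critical wrt $S$, which together with the previous identity forces $S\cup(G\setminus\{i\})\notin\wc$ while $S\cup G\in\wc$, i.e.\ $v_i'\big(S\cup(G\setminus\{i\})\big)=1$. If $S\in\wc$ and $i\in S$, then $G\subseteq S$ and $S\cup(G\setminus\{i\})=S\in\wc$, so essentiality of $i$ forces $(S\setminus G)\cup\{i\}=S\setminus(G\setminus\{i\})\in\wc$ while $S\setminus G\notin\wc$, i.e.\ $v_i'(S\setminus G)=1$. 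In either case $i$ has a nonzero marginal contribution, contradicting nullity. Hence $\beta_i^{g,\kappa}=0$ for every $\kappa$ and $\pi_i^g=0$.

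For the second implication I would prove the contrapositive: if $i$ is not null, pick $S_0$ with $v_i'(S_0)\neq 0$; since $v$ takes values in $\{0,1\}$ this means either ($i\notin S_0$, $v(S_0)=0$, $v(S_0\cup\{i\})=1$) or ($i\in S_0$, $v(S_0)=1$, $v(S_0\setminus\{i\})=0$). In both cases the singleton $\{i\}$ is critical wrt $S_0$, and it is trivially essential since the empty coalition is never critical (its criticality would require $S_0\in\wc$ and $S_0\notin\wc$ at once). Thus $\{i\}\in\mathcal{G}^e(S_0)$; as every essential coalition has cardinality at least $1$, $\{i\}$ is of minimum size and so $\{i\}\in\mathcal{G}^e_m(S_0)$, whence $i$ is $m$-critical of rank $1$ wrt $S_0$ and $\pi_i^m\ge\beta_i^{m,1}\ge p(S_0)>0$.

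The argument is essentially routine once these reductions are in place; the only point requiring care is the bookkeeping in the dichotomy of the first implication — namely checking that in each admissible configuration the ``$S$-only'' half of the criticality condition for $G\setminus\{i\}$ holds automatically, so that the obstruction to $G\setminus\{i\}$ being critical is pinned exactly to $v_i'$ taking the value $1$ on a specific coalition. I do not anticipate any deeper difficulty.
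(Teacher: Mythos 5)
Your proof is correct and follows essentially the same route as the paper: null $\Rightarrow \pi_i^g=0$, then $\pi_i^m\le\pi_i^g$ from Proposition \ref{prop: relationPi}, and non-null $\Rightarrow$ $m$-critical of rank $1$ via the singleton $\{i\}$ (the paper exhibits a minimal winning coalition containing $i$, you pick any coalition where $v_i'$ equals $1$ --- the same idea). You also spell out the case analysis behind the paper's ``easy to show'' step and make explicit the full-support assumption on $p$ that the paper leaves implicit, which is a sensible touch.
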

	\begin{proof} If player i is null it is easy to show that $\pi_i^g=0$, then by Proposition \ref{prop: relationPi} $\pi_i^m=0$.\\
		Suppose  that  player $i$ is not null.  Then they belong to some minimal winning coalition, $S$. We observe that the coalition $G=\{i\}$ is minimal essential wrt $S$ and then $\pi_i^m>0$, then $\pi_i^g>0$.
	\end{proof}

	\subsection{Average ranks}
	In section \ref{sec:compute}, we have shown that the rank of $d$-criticality is determined by a restricted group of $m$-critical players. This dependence is confirmed by the analysis of the average rank of differential criticality $\bar{\kappa}^d_i$, for a player $i \in N$, defined as
	\[
	\bar{\kappa}^d_i := \sum_{\kappa=1}^{n} \kappa \beta^{d,\kappa}_{i}.
	\]
	Since every player is $d$-critical of some order, we have
	\[
	\bar{\kappa}^d_i = \sum_{S \subseteq N}   \kappa_i^d(S) p(S),
	\]
	where $\kappa_i^d(S)$ is the $d$-critical rank of player $i$ wrt the coalition $S$. We now show that $\bar{\kappa}^d_i$ depends on two parameters of minimal criticality: the probability of being minimal essential critical and the  global average order of minimal criticality defined as
	\[
	\mu=\sum_{S \subseteq N} \kappa^m(S) p(S) \; .
	\]
	\begin{proposition} \label{prop:kid} For any player $i \in N$,
		\begin{equation}
			\label{eq:kid}
			\bar{\kappa}^d_i = 1+ \mu - \pi_i^m \; .
		\end{equation}
	\end{proposition}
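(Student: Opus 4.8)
The plan is to reduce the claim to a pointwise identity, valid coalition by coalition, and then integrate against $p$. First I would record the elementary observation that for \emph{every} $S\subseteq N$ the grand coalition $G=N$ is critical wrt $S$, since $S\cup N=N\in\wc$ and $S\setminus N=\varnothing\notin\wc$; in particular every player is $d$-critical of some rank wrt every $S$, and, after discarding the non-essential members of such a $G$, the family $\mathcal{G}^e(S)$ is non-empty. Hence $\kappa^m(S)$ from \eqref{def:mS} is well-defined for all $S\subseteq N$, the rank $\kappa_i^d(S)$ is defined for all $S$, and the sum $\sum_{S\subseteq N}\kappa_i^d(S)p(S)$ genuinely ranges over the whole support of $p$ (including degenerate coalitions such as $S=\varnothing$ or $S=N$).

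Next I would invoke Proposition \ref{prop:m+1=d} to get the two-case description of the $d$-rank: $\kappa_i^d(S)=\kappa^m(S)$ when $i$ is $m$-critical wrt $S$ (case i), and $\kappa_i^d(S)=\kappa^m(S)+1$ otherwise (case ii). These two cases combine into the single formula
\[
\kappa_i^d(S)=\kappa^m(S)+1-\mathbb{1}\{\,i\text{ is }m\text{-critical wrt }S\,\},
\]
which holds for every $S\subseteq N$.

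Then I would substitute this into $\bar{\kappa}^d_i=\sum_{S\subseteq N}\kappa_i^d(S)p(S)$ and split the sum into three pieces. The term $\sum_{S\subseteq N}\kappa^m(S)p(S)$ is $\mu$ by definition; $\sum_{S\subseteq N}p(S)=1$ since $p\in\mathcal{P}^N$; and $\sum_{S\subseteq N}\mathbb{1}\{\,i\text{ is }m\text{-critical wrt }S\,\}p(S)$ is, by Definition \ref{def:powerorder} together with $\pi_i^m=\pi_i^{m,n}=\sum_{\kappa=1}^n\beta_i^{m,\kappa}$, exactly the total probability that $i$ is $m$-critical of some rank, i.e.\ $\pi_i^m$. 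Collecting the three contributions yields $\bar{\kappa}^d_i=\mu+1-\pi_i^m$, which is \eqref{eq:kid}.

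There is no real obstacle here; the argument is essentially bookkeeping on top of Proposition \ref{prop:m+1=d}. The only points that deserve explicit mention are (a) that the two-case formula applies to \emph{every} coalition in the support of $p$ — which is exactly what the opening remark on $G=N$ guarantees — and (b) the identification of $\sum_{S\subseteq N}\mathbb{1}\{\cdot\}p(S)$ with $\pi_i^m$, which is immediate once the definitions are unwound but should be spelled out since $\pi_i^m$ was introduced through the rank-indexed measures $\beta_i^{m,\kappa}$ rather than directly.
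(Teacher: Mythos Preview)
Your proof is correct and is essentially the same as the paper's: both invoke Proposition \ref{prop:m+1=d} to write $\kappa_i^d(S)$ as $\kappa^m(S)$ plus an adjustment depending on whether $i$ is $m$-critical wrt $S$, then sum against $p$. The only cosmetic difference is that you package the two cases into a single indicator-function identity before summing, whereas the paper splits the sum into the $m$-critical and non-$m$-critical parts first and then recombines; your version also makes explicit the well-definedness of $\kappa^m(S)$ and the identification of $\sum_S \mathbb{1}\{\cdot\}p(S)$ with $\pi_i^m$, which the paper leaves implicit.
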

	\begin{proof}
		For any $i \in N$, we have
		\begin{align*}
			\bar{\kappa}^d_i &= \sum_{S \subseteq N} k_i^d(S) p(S) =\sum_{\substack{S \subseteq N \\ i \: m \mbox{-critical}}} k_i^d(S)  p(S)+\sum_{\substack{S \subseteq N \\ i \mbox{ not } \: m \mbox{-critical}}} k_i^d(S)  p(S) 
			\\ &=
			\sum_{\substack{S \subseteq N \\ i \: m \mbox{-critical}}} k^m(S) p(S)+\sum_{\substack{S \subseteq N \\ i \mbox{ not } \: m \mbox{-critical}}} (k^m(S)+1) p(S) 
			\\ &=
			\sum_{S \subseteq N} k^m(S) p(S) + \sum_{\substack{S \subseteq N \\ i \mbox{ not } \: m \mbox{-critical}}} p(S) = \mu +1 - \pi_i^m. 
		\end{align*}
		
	\end{proof}
	The following corollary shows that the difference between the average rank of differential criticality of two players depends exclusively on the difference in their probability of essential minimal criticality. 
	\begin{corollary}
		Given two players $i,j\in N$, then
		\[ \bar{\kappa}^d_i-\bar{\kappa}^d_j=\pi_j^m-\pi_i^m.\]
	\end{corollary}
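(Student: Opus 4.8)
The plan is to obtain this as an immediate consequence of Proposition \ref{prop:kid}. That proposition supplies, for every player $i\in N$, the closed form $\bar{\kappa}^d_i = 1 + \mu - \pi_i^m$, where the constant $1$ and the global average order of minimal criticality $\mu = \sum_{S\subseteq N}\kappa^m(S)p(S)$ do not depend on the identity of the player. So the only player-specific term is $\pi_i^m$.

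I would therefore just write down the identity of Proposition \ref{prop:kid} twice, once for $i$ and once for $j$, and subtract. The player-independent quantities $1$ and $\mu$ cancel, yielding
\[
\bar{\kappa}^d_i - \bar{\kappa}^d_j = \bigl(1 + \mu - \pi_i^m\bigr) - \bigl(1 + \mu - \pi_j^m\bigr) = \pi_j^m - \pi_i^m,
\]
which is exactly the claimed equality.

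There is no real obstacle: all the work has already been done in Proposition \ref{prop:kid}, whose proof splits $\sum_{S}\kappa_i^d(S)p(S)$ according to whether $i$ is $m$-critical wrt $S$ and uses Proposition \ref{prop:m+1=d} to replace $\kappa_i^d(S)$ by $\kappa^m(S)$ or $\kappa^m(S)+1$ accordingly. The corollary's role is purely interpretive, so I would keep the proof to the one-line subtraction above and, if desired, append a remark emphasizing that the ordering of players by average differential rank is the reverse of their ordering by probability of being essential minimal critical, so that $d$-criticality ultimately ranks players only through $m$-criticality.
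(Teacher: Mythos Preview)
Your proposal is correct and matches the paper's approach: the corollary is stated without proof in the paper precisely because it follows immediately from Proposition~\ref{prop:kid} by writing \eqref{eq:kid} for $i$ and for $j$ and subtracting, exactly as you do. The additional interpretive remark you suggest is fine but optional.
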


	\begin{example}
		Consider 4 players, $N=\{1,2,3,4\}$, and $\mathcal{W}_{\min}=\{\{1,2\},\{3\}\}$. Take a uniform probability distribution $p\in\mathcal{P}^N$, i.e.\ $p(S)=\frac{1}{16}$, $\forall S\subseteq N$. Then 
		
		\begin{equation*}
			\bar{\kappa}^d_1=\bar{\kappa}^d_2=\frac{7}{4}, \quad \bar{\kappa}^d_3=\frac{5}{4}, \quad \bar{\kappa}^d_4=\frac{17}{18} \; .
		\end{equation*}
		\vskip0.3cm 
		We next verify Proposition \ref{prop:kid}. We note that $k^m(S)=1$ for all the coalitions, but coalitions $\{1,2,3\}$ and $\{1,2,3,4\}$, for which $k^m(S)=2$. Therefore,
		\[
		\mu= \frac{9}{18} \; .
		\]
		Then we have
		\begin{align*}
			1+\mu-\pi_1^m & =  \frac{7}{4} = \bar{\kappa}^d_1 ;\quad
			1+\mu-\pi_2^m  =  \frac{7}{4} = \bar{\kappa}^d_2 ;\\
			1+\mu-\pi_3^m & =  \frac{5}{4} = \bar{\kappa}^d_3 ;\quad
			1+\mu-\pi_4^m  =  \frac{17}{8} = \bar{\kappa}^d_4 .
		\end{align*}
	\end{example}

 \section{An Electoral Application}
 Following the idea of Hausken and Mohr \cite{hm01} we analyse the $g$-criticality in the voting system of the Italian parliament with a  focus on the 2013, 2018, and 2022 elections.  \\
Italy has a bicameral legislative body and a winning coalition is a set of parties that have the majority in both Camera Dei Deputati and Senato. Up to 2018 the minimal age for passive and active electorate, were different in the two chambers and, moreover, the rules for allocating seats have changed between elections. Such changes make the changes of power indices from one election to the next of little significance. We focus instead on the comparison of the parties' indices and their distribution among the several ranks within the same election.\\

 We consider two indices, the first one is à la Shapley-Shubik and the second one is à la Banzhaf.  We want to show what is the power of a party according to its rank of $g$-criticality.
\begin{definition}
Let $N$ be a set of $n$ players. The {\bf Shapley-Shubik group essential power index} (or $g$-Shapley-Shubik value) of rank $\kappa$ for player $i$ is
\[
\phi^{g,\kappa}_{i} = \sum_{S \subseteq N} \frac{(n-s)!s!}{(n+1)!} I_g(i,\kappa,S),\]
where $s=|S|$ and 
\[I_g(i,\kappa,S) = \begin{cases}
1 & \small{\begin{array}{c} i  \mbox{ is $g$-critical of}  \\
 \mbox{rank }\kappa \mbox{ wrt }S \mbox{,} \end{array} }
\\
0  & \mbox{otherwise.}
\end{cases}
\]

\end{definition}
At first sight $\phi^{g,1}_{i}$ differs from the usual Shapley-Shubik index for player $i$ since the former counts both outside and inside criticality for any subcoalition of $N$, while the latter only records outside criticality for those coalitions of $N$ that do not include $i$. Indeed, the two indices coincide, since for each $S\subseteq N\setminus\{i\}$ such that $i$ is critical we have that $i$ is $g$-critical of order $1$ for both $S$ and $S\cup\{i\}$. Then we have to sum up the correspondent coefficients:
\begin{align*}
    \frac{(n-s)!s!}{(n+1)!} +  \frac{(n-s-1)!(s+1)!}{(n+1)!} &= \frac{(n-s-1)!s![(n-s)+(s+1)]}{(n+1)!} \\ & = \frac{(n-s-1)!s!}{n!},
\end{align*}
that is exactly the usual Shapley coefficient.
\begin{definition}
Let $N$ be a set of $n$ players. The {\bf Banzhaf group essential power index} (or $g$-Banzhaf value) of rank $\kappa$ for player $i$ is
\[
\beta^{g,\kappa}_{i} =\frac{1}{2^n}\sum_{S \subseteq N} I_g(i,\kappa,S).
\]
\end{definition}

By means of a reasoning similar to that already employed for the Shapley-Shubik index, it can be verified that $\beta^{g,1}_{i}$ is the usual Banzhaf value for player $i$.\\

\noindent
\textit{2013 ELECTIONS:}\\
We have 15 parties that have positive $g$-Shapley-Shubik and $g$-Banzhaf values. In Table \ref{table:2013} there are the results of the election. The government was form by the coalition Partito Democratico (PD), Scelta Civica (SC), Unione di Centro (UC), S\"udtiroler Volkspartai (SV), Centro Democratico (CD), Futuro e Liberta per l'Italia (FL), Movimento Associativo Italiani all'Estero (MAIE) and other independents. 
\begin{table}[h]
\centering
    \begin{tabular}{|m{5cm}||m{2.5cm}|m{2.5cm}|m{2cm}|}
    \hline
    \bf Party & \bf Acronym & \bf Camera dei deputati & \bf Senato \\
    \hline
        Partito Democratico & PD &  297 & 111 \\
        Il Popolo della Libert\'a & PdL & 98 & 98 \\
        Movimento 5 Stelle & M5S & 109 & 54 \\
        Lega Nord & LN & 18 & 18 \\
        Scelta Civica & SC & 38 & 15 \\
        Sinistra Ecologia e Libert\'a & SI & 37 & 7 \\
        Unione di Centro & UC & 8 & 2 \\
        S\"udtiroler Volkspartei & SV & 5 & 4 \\ 
        Fratelli d'Italia & FdI & 9 & 0 \\
        Centro Democratico & CD & 6 & 1  \\
        Futuro e Libertà per l'Italia & FL & 1 & 2 \\
        Movimento Associativo Italiani all'Estero & MAIE & 2 & 1 \\
        Vallée d'Aoste (UV-SA-FA) & VA & 1 & 1 \\
        Grande Sud & GS & 0 & 1 \\
       	Unione Sudamericana Emigrati Italiani & USE & 1 & 1\\
        \hline
    \end{tabular}
    \caption{Seats in the two chambers for the 2013 Elections}
    \label{table:2013}
\end{table}

\begin{center}
    \includegraphics[width=0.80\textwidth]{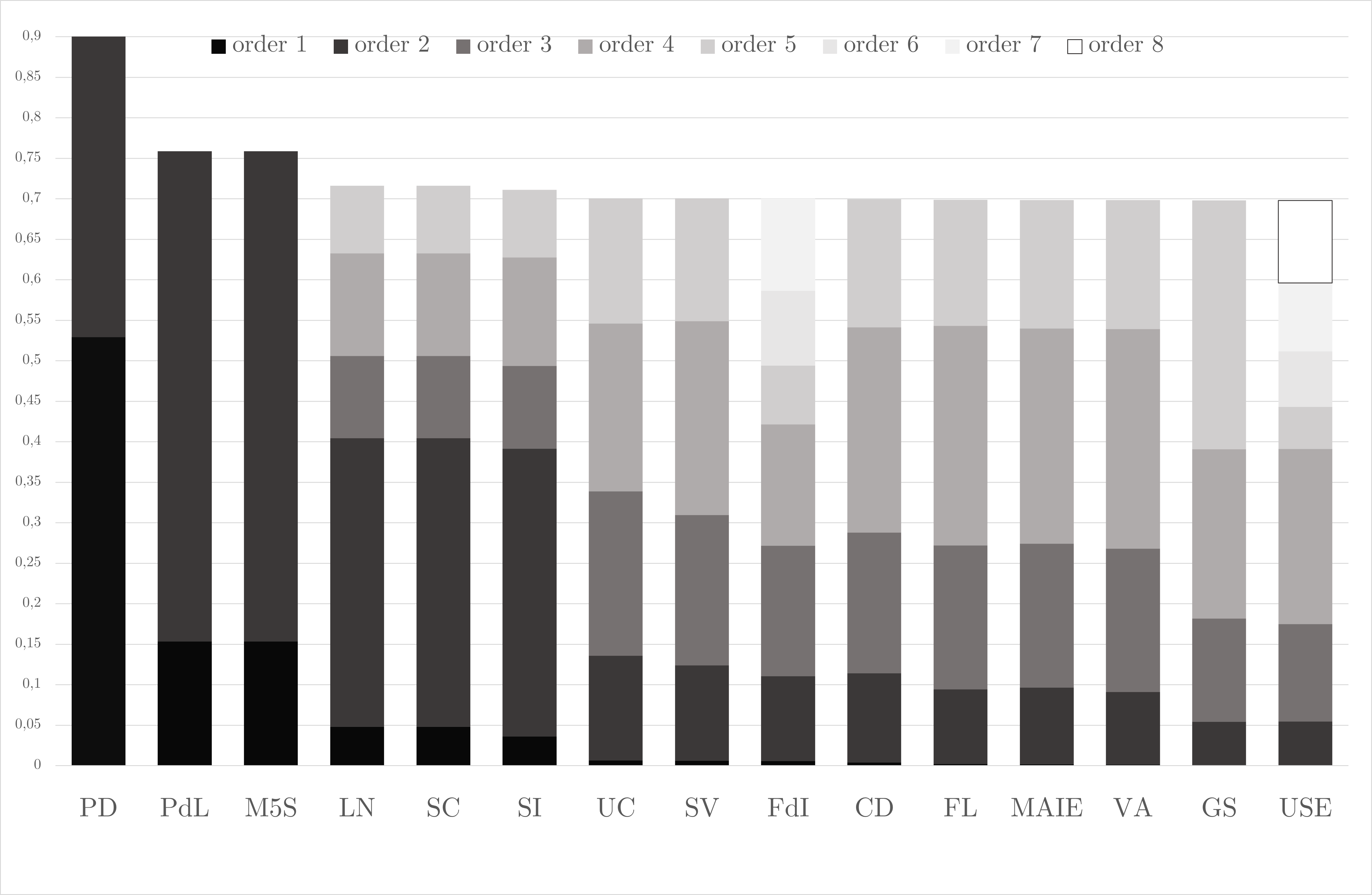} \\  
    {$g$-Shapley-Shubik values for the 2013 Election. }
\end{center}
\begin{center}
    \includegraphics[width=0.80\textwidth]{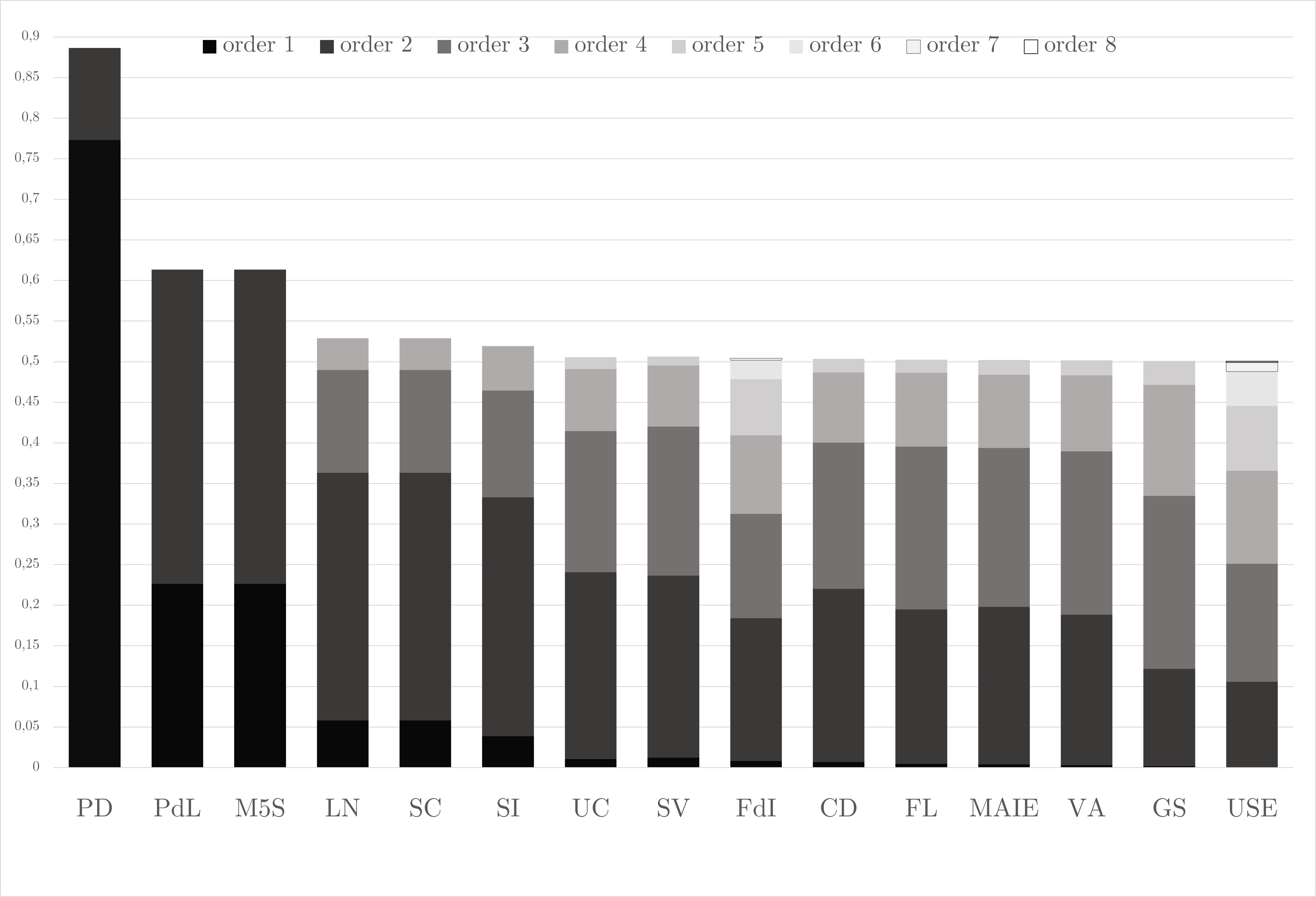}\\
    {$g$-Banzhaf values for the 2013 election. }
\end{center}
We note that PD has a large $g$-Shapley value but a small $g$-Banzhaf value of rank 2, this is due to the different probabilities assigned to the coalitions containing that party for the Shapley and Banzhaf values. The government  is formed by 5 parties with large $g$-Shapley value of rank 5 and large $g$-Banzhaf value of rank 3.\\

\noindent
\textit{2018 ELECTIONS:}\\
We have only 4 parties that have positive g-Shapley and g-Banzhaf values, the other 10 parties have no power. In Table \ref{table:2018} there are the results of the elections. There were three governments:
\begin{enumerate}
    \item Movimento 5 Stelle (M5S), Lega Nord (LN) + Others
    \item M5S, PD + Others
    \item M5S, LN, PD, Forza Italia (FI) + Others
\end{enumerate}
\begin{table}[h]
\centering
    \begin{tabular}{|m{5cm}||m{2.5cm}|m{2.5cm}|m{2cm}|}
    \hline
    \bf Party & \bf Acronym & \bf Camera dei deputati & \bf Senato \\
    \hline
        Movimento 5 Stelle & M5S & 227 & 112 \\
        Lega  & LN & 125 & 58 \\
        Partito Democratico & PD &  112 & 53 \\
        Forza Italia & FI & 104 & 57 \\
        Fratelli d'Italia & FdI & 32 & 18 \\
        Liberi e Uguali & LU & 14 & 4 \\
        Noi con l'Italia & UDC & 4 & 4 \\
        S\"udtiroler Volkspartei & SV & 4 & 3 \\ 
        +Europa & +Eu & 3 & 1 \\
        Civica popolare & CP & 2 & 1  \\
        Italia Europa Insieme & IEI & 1 & 1 \\
        Movimento Associativo Italiani all'Estero & MAIE & 1 & 1 \\
       	Unione Sudamericana Emigrati Italiani & USE & 1 & 1\\
        Union Vald\^otaine & UV & 0 & 1 \\
        \hline
    \end{tabular}
    \caption{Seats in the two chambers for the 2018 Elections}
    \label{table:2018}
\end{table}

\begin{center}
    \includegraphics[width=0.40\textwidth]{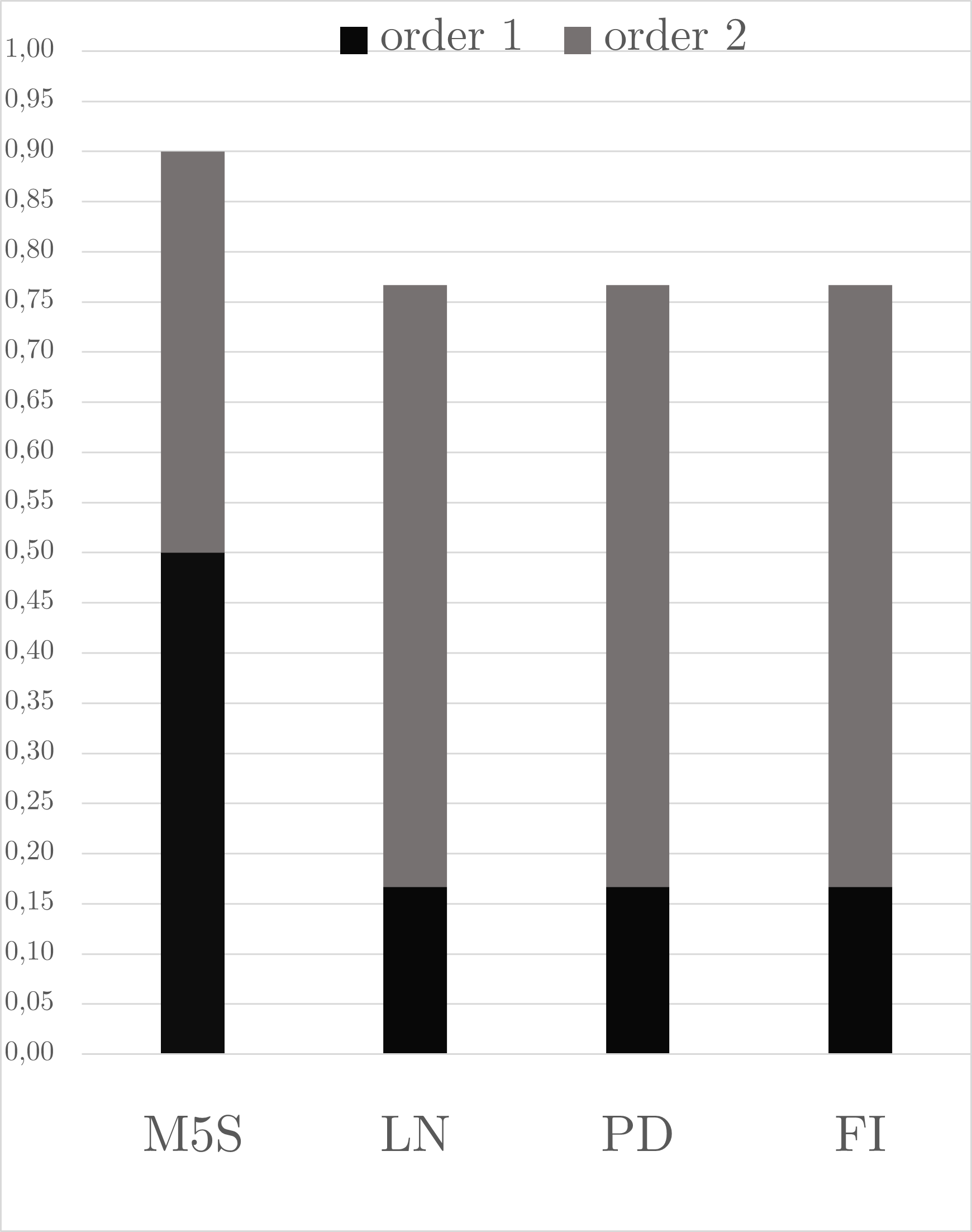}\quad
     \includegraphics[width=0.40\textwidth]{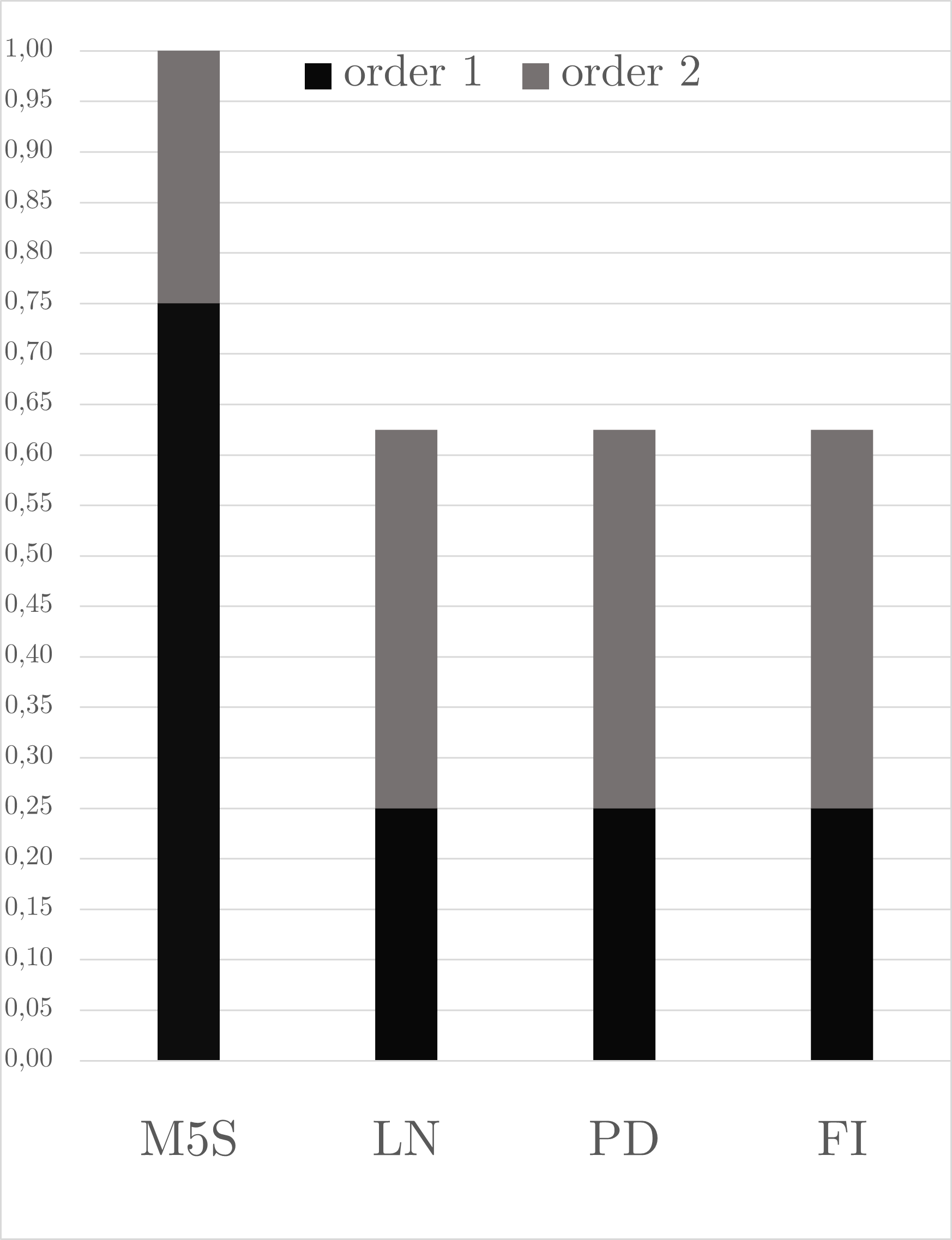}\\
    {The $g$-Shapley-Shubik (left) and $g$-Banzhaf (right) values for the 2018 elections. }
\end{center}
We observe that in this election the governments was formed by MS5 that has the highest value for the $g$-Shapley and $g$-Banzhaf values of rank 1. On the other hand the other parties have the same values for the g-Shapley and g-Banzhaf values of rank 2.\\

\noindent
\textit{2022 ELECTIONS:}\\
We have 14 parties that have positive $g$-Shapley and $g$-Banzhaf values. In Table \ref{table:2022} there are the results of the elections. The government is formed by Fratelli d'Italia (FdI), Lega, FI and Noi Moderati (NM).
\begin{table}[h]
\centering
    \begin{tabular}{|m{5cm}||m{2.5cm}|m{2.5cm}|m{2cm}|}
    \hline
    \bf Party & \bf Acronym & \bf Camera dei deputati & \bf Senato \\
    \hline
        Fratelli d'Italia & FdI & 119 & 65 \\
        Partito Democratico & PD &  69 & 40 \\
        Lega per Salvini Premier  & Lega & 66 & 30 \\
        Movimento 5 Stelle & M5S & 52 & 18 \\
        Forza Italia & FI & 45 & 28 \\
        Azione - Italia Viva & Az IV & 21 & 9 \\
        Alleanza Verdi e Sinistra & V e S & 12 & 4 \\
        Noi Moderati & NM  & 7 & 2 \\
        S\"udtiroler Volkspartei & SVP & 3 & 2 \\ 
        +Europa & +Eu & 2 & 0 \\
        Impegno Civico - Centro Democratico & IC & 1 & 0  \\
        Sud Chiama Nord & ScN & 1 & 1 \\
        Union Vald\^otaine & UV & 1 & 0 \\
        Movimento Associativo Italiani all'Estero & MAIE & 1 & 1 \\
        \hline
    \end{tabular}
    \caption{Seats in the two chambers for the  2022 Elections}
    \label{table:2022}
\end{table}
\begin{center}
    \includegraphics[width=1\textwidth]{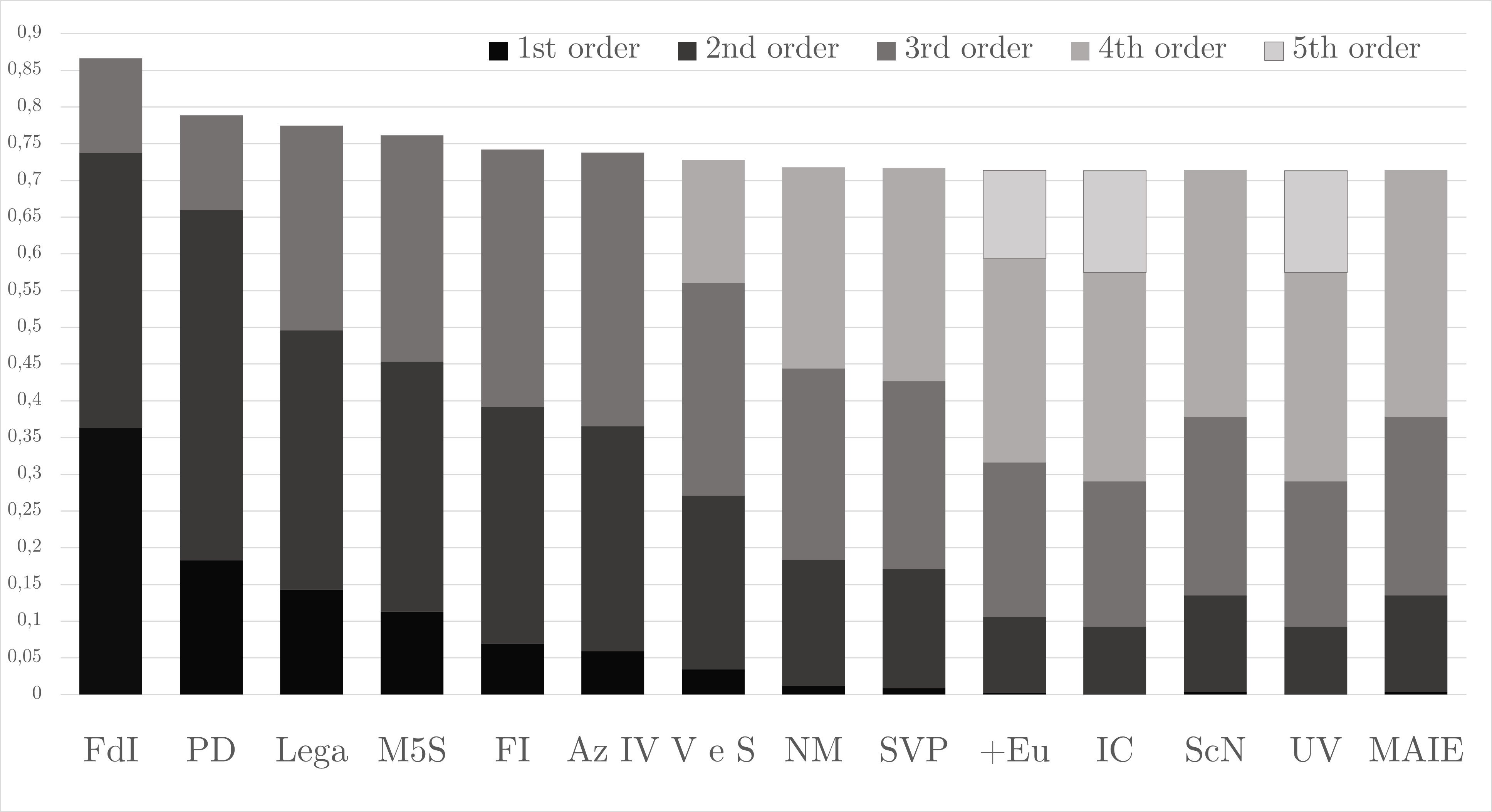}
    {$g$-Shapley-Shubik values for the 2022 election. }
\end{center}
In this election parties Az-IV and FI have the highest $g$-Shapley and $b$-Banzhaf values of rank 3, respectively.
\begin{center}
     \includegraphics[width=\textwidth]{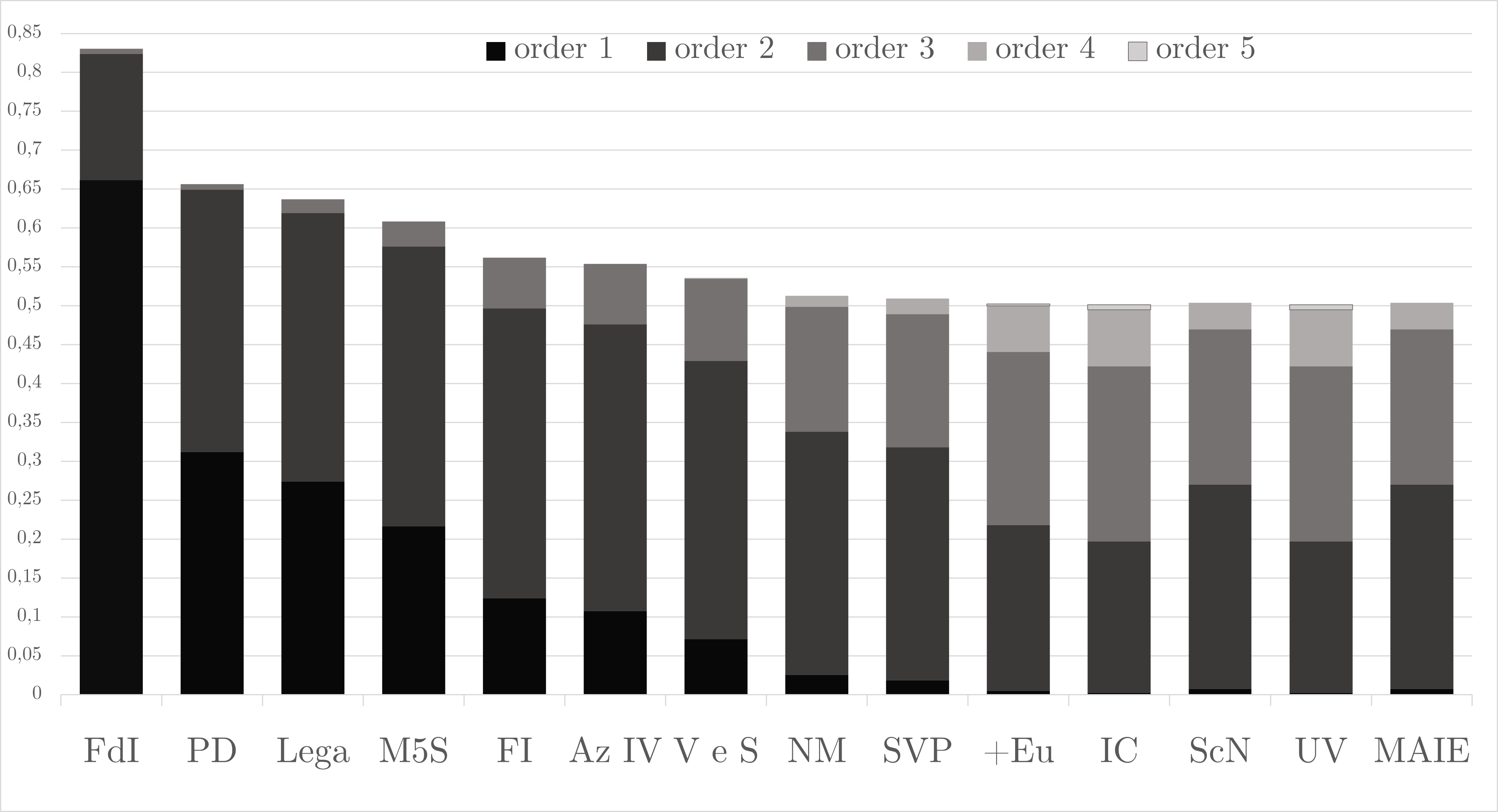}\\
    {$g$-Banzhaf values for the 2022 election. }
\end{center}


The electoral application gives us some takeaways: distributions in the parliament seats that are quite similar may lead to drastically different distributions in terms of power. In the 2013 and 2022 elections all parties, including those with only 1 seat had some higher rank power index, and the total probability of being critical of some rank, i.e.\ the sum of power indices of all ranks for a single party, is very similar for all the parties, from the most voted to those having a single representative. On the other hand, in the 2018 election, only 4 out of the 14 parties detain some power, and each of them needs at most one other party to achieve the majority. The remaining 10 parties have no power of sorts, according to the Shapley, the Banzhaf or any other index based on Definition \ref{def:powerorder}. Furthermore, the application shows a monotonic behaviour of the indices wrt to the number of seats: if party $A$ has more seats than party $B$ in both chambers, the vector of average ranks of party $A$ dominates  that of party $B$ in first-order stochastic dominance. Such dominance is easy to prove. Indeed we can apply proposition \ref{prop:gmonotone} taking a game where party $A$ has more seats than party $B$ in both chambers and another game where party $A$ has the same number of seats than party $B$ in both chambers. 

Some pair of parties cannot be compared because there is no clear winner in comparing the seats in the two chambers. When we consider the single average ranks of the players, instead, no monotonic behavior emerges.

\section{Opportunity Measures that Involve More Players}
Our attention now turns to the concept of a player's power, assessed as the opportunity afforded to them by the other players. This analysis is prompted by the role assigned to this feature by Beisbart, which guided his decisions in formulating a criticality measure for groups.
    In the author's words, opportunity will take place "if and only if the configuration of the other votes is such that, whether a bill passes or not depends on her vote" (\cite{b10}, p.472), with the votes of the other players taken as given. Consequently, any proposal to define the voting power of a player should satisfy the following benchmark.
	
	\begin{definition}
		\label{def:optest}
		A criticality notion passes the \emph{opportunity test}, if whenever a player $a \in N$ is critical wrt $S$, then, the same player is also critical wrt $S \cup \{a\}$ and $S \setminus \{a\}$. 
	\end{definition}
	
	Beisbart's notion of $d$-criticality passes the test 	for any rank, while $e$- criticality and $g$-criticality require
 a player to be indispensable in overturning the outcome and, therefore, fails the same test. Take for instance $S =\{1\}$ in Example \ref{ex:mdgranks}. While player 2 is $d$-critical of rank 2 for both $S \setminus\{2\}=S$ and $S \cup \{2\}=\{1,2\}$, the same player is $g$-critical of the same rank for $S$ alone, while it is inessential for $S \cup \{2\}$, and therefore not $g$-critical of any rank. On the other hand, as pointed out in the previous sections, $g$-criticality satisfies many properties that the differential counterpart fails to satisfy.

 At first sight there does not seem to exist an uncontested winner among the proposals for extending the notion of criticality to actions involving groups of players. A closer look makes us question the adequacy of the opportunity test given in Definition \ref{def:optest} for notions of criticality that involve groups of players. This indicator
 was defined in a context where voting power measures ``quantify the extent (denoted as $E_\kappa$) to which $i$ can be a member of a group of size $\kappa$ that has the opportunity to make a difference as to whether the bill passes" (Beisbart \cite{b10}, p.473). 

When the first rank is considered, all the notions of criticality considered in this work pass the test, since a player $i$ is outside critical for $S \setminus \{i\}$ if and only if it is inside critical for $S \cup \{i\}$.

 Moving to higher ranks and focusing on $d$-criticality, an outside critical player for $S \setminus \{i\}=S$ cannot become inside critical for $S \cup \{i\}$, because other players outside $S$ are needed to change the outcome. These players are jointly critical wrt to $S \cup \{i\}$, making $i$ inessential for the change. In fact, the following result regarding $d$-criticality holds.
	
	\begin{proposition}
		Suppose player $i \in N$ is $d$-critical of rank 2 or higher wrt $S$. Then $i$ is inessential for at least one of the critical coalitions that define the rank of $d$-criticality in $S \setminus \{i\}$ and $S \cup \{i\}$.
	\end{proposition}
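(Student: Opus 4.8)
The plan is to exploit the set identities that make $d$-criticality pass the opportunity test in the first place. For any coalition $G$ with $i\in G$ one has $(S\cup\{i\})\cup G = S\cup G = (S\setminus\{i\})\cup G$ and $(S\cup\{i\})\setminus G = S\setminus G = (S\setminus\{i\})\setminus G$, so $G$ is critical wrt $S$ if and only if it is critical wrt $S\cup\{i\}$ if and only if it is critical wrt $S\setminus\{i\}$. First I would record the consequence: the rank of $d$-criticality of $i$ is the same integer $\kappa$ wrt all three coalitions $S$, $S\cup\{i\}$ and $S\setminus\{i\}$, and any coalition $G$ realising this rank for $S$ realises it for the other two as well. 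It then suffices to fix one such $G$, set $G^-:=G\setminus\{i\}$ (of cardinality $\kappa-1\ge 1$), recall that $i$ is essential for $G$ being critical wrt a coalition $T$ exactly when $G^-$ is \emph{not} critical wrt $T$, and show that $G^-$ is critical wrt $S\cup\{i\}$ or wrt $S\setminus\{i\}$.

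Then I would argue by contradiction, supposing $G^-$ is critical wrt neither. Since $G$ is critical wrt $S$ we already have $S\cup G\in\wc$ and $S\setminus G\notin\wc$. Because $(S\cup\{i\})\cup G^- = S\cup G\in\wc$, the only way for $G^-$ to fail to be critical wrt $S\cup\{i\}$ is $(S\cup\{i\})\setminus G^-\in\wc$, whence $S\cup\{i\}\in\wc$ as a superset of a winning coalition. Dually, because $(S\setminus\{i\})\setminus G^- = S\setminus G\notin\wc$, the only way for $G^-$ to fail to be critical wrt $S\setminus\{i\}$ is $(S\setminus\{i\})\cup G^-\notin\wc$, whence $S\setminus\{i\}\notin\wc$ as a subset of a losing coalition. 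But $S\cup\{i\}\in\wc$ together with $S\setminus\{i\}\notin\wc$ means precisely that $\{i\}$ is critical wrt $S$, i.e.\ that $i$ is $d$-critical of rank $1$ wrt $S$, contradicting $\kappa\ge 2$.

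Hence $G^-$ is critical wrt $S\cup\{i\}$ or wrt $S\setminus\{i\}$, i.e.\ $i$ is inessential for $G$ wrt one of these two coalitions; since $G$ defines the rank of $d$-criticality of $i$ in both, the proposition follows. I do not anticipate a real obstacle here once the rank-invariance of the first step is secured; the only thing to be careful about is the bookkeeping with the set identities and applying monotonicity in the correct direction (supersets of winning coalitions are winning, subsets of losing coalitions are losing). A by-product worth mentioning is that the disjunction can be pinned down: if $S\notin\wc$ then $S\cup\{i\}\notin\wc$ and $i$ is inessential for \emph{every} rank-defining $G$ wrt $S\cup\{i\}$, while if $S\in\wc$ then $S\setminus\{i\}\in\wc$ and $i$ is inessential for every such $G$ wrt $S\setminus\{i\}$.
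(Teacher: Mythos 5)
Your proof is correct, but it is organized differently from the paper's. The paper argues directly by cases on the position of $i$ and the status of $S$: if $i \notin S \in \mathcal{W}$ (or $i \in S \notin \mathcal{W}$), then $i$ lies on the irrelevant side of $S\setminus\{i\}=S$ (resp.\ $S\cup\{i\}=S$), so deleting $i$ from any critical coalition $G$ there changes neither the winning status of $T\cup G$ nor the losing status of $T\setminus G$, and $i$ is automatically inessential; in the remaining cases $i\in S\in\mathcal{W}$ and $i\notin S\notin\mathcal{W}$, the hypothesis that the rank is at least $2$ forces $S\setminus\{i\}\in\mathcal{W}$ (resp.\ $S\cup\{i\}\notin\mathcal{W}$), which reduces them to the first two. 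You instead first prove an invariance statement (the family of critical coalitions containing $i$, hence the rank and its minimizers, is unchanged when $i$ is toggled in or out of $S$) and then run a single uniform contradiction: essentiality of $i$ for a fixed rank-defining $G$ wrt both $S\cup\{i\}$ and $S\setminus\{i\}$ would give $S\cup\{i\}\in\mathcal{W}$ and $S\setminus\{i\}\notin\mathcal{W}$, i.e.\ rank $1$. Your route buys uniformity (no case split in the main step) and makes explicit the invariance fact that the paper uses only implicitly in its discussion of the opportunity test; the paper's route is shorter and identifies from the outset on which of the two coalitions essentiality fails, information you recover only in your closing ``by-product'' remark, which essentially reproduces the paper's case analysis. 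The set identities, the monotonicity directions and the reduction to rank $1$ in your argument all check out, so there is no gap.
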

	\begin{proof}
		If and $i \notin S \in \mathcal{W}$, then $i$ is always inessential for the critical coalition that turns $S=S\setminus\{i\}$ into a losing coalition. The case when $i \in S \notin \mathcal{W}$ and $i$ is not essential for $S=S \cup \{i\}$ is similar.
		
		Consider now $i \in S \in \mathcal{W}$. Since $i$ is $d$-critical with a rank greater than 1, then $S \setminus \{i\} \in \mathcal{W}$, with $i$ not essential for the critical coalition that turns $S\setminus\{i\}$ into a losing one. The case $i \notin S \notin \mathcal{W}$ is treated similarly.
	\end{proof}
 
When a group of players is involved, $d$-criticality satisfies the opportunity test, but the price to pay is to make inessential players critical in at least half of the cases where their $d$-criticality rank is 2 or higher. The fraction of cases may be higher than that, since a player may be inessential for both  $S\setminus\{i\}$ and  $S\cup\{i\}$. Consider for instance player 6 in Example \ref{ex:mdgranks}, who is inessential for both  $S\setminus\{6\}$ and $S\cup\{6\}$ when $S=\{1\}$.\\
	
Once the context changes from the situation where the action of single players is measured, to the context where several other players are involved, we believe that the notion of the opportunity left by the other players should change in such a way that it reflects the joint action of players. We therefore suggest the following proposal.

\begin{definition}
    A criticality notion for player $i$ wrt to a coalition $S$ satisfies {\em the opportunity test for an action involving a coalition  of other players $G \subset N \setminus \{i\}$} if $i$ is either critical for the pair $(S \cup G) \setminus \{i\}$ and $S \cup G \cup \{i\}$ or it is critical for the pair $(S \setminus G) \setminus \{i\}$ and $(S \setminus G) \cup \{i\}$.
\end{definition}
Clearly, if the action of player $i$ involves no other player, then the test coincides with the familiar Definition \ref{def:optest}.
It can be easily verified that $g$-criticality passes the new test, while guaranteeing the essential role of the players.

Differently from what happens with $d$-criticality, a player $i$ that is $g$-critical wrt $S$ of rank two or higher will not be simultaneously critical for $S \setminus \{i\}$ and $S \cup \{i\}$ and the symmetry displayed in the previous situation breaks down. If $i$ is outside $g$-critical of rank $\kappa \geq 2$ wrt $S$, it will not be $g$-critical for $S \cup \{i\}$. Instead it will be inside $g$-critical of rank 1 wrt $S \cup G$ where $G$ is some essential critical coalition for $S$ of size $\kappa - 1$. Incidentally, $i$ will also be outside $g$-critical wrt all the coalitions $S \cup G'$ with $i \notin G'$ and with the rank given by $|G \setminus G'|+1$.  Similarly, if $i$ is inside $g$-critical of rank $\kappa \geq 2$ wrt $S$, it will not be $g$-critical for $S \setminus \{i\}$. Instead it will be outside $g$-critical of rank 1 wrt $S \setminus G$ where $G$ is some essential critical coalition for $S$ of size $\kappa - 1$. Incidentally, $i$ will also be inside $g$-critical wrt all the coalitions $S \setminus G'$ with $i \notin G'$ and with the rank given by $|G \setminus G'|+1$.

	We note that symmetry holds when first order rank is considered in line with what is already known about classical criticality. Higher order criticality, instead, contributes to a single side alone: outside criticality when $S \notin \mathcal{W}$ and inside criticality otherwise. This asymmetric part arises from the collaborative effort among players.

	\section{Conclusions}
	By extending the notion of order of criticality given in Dall'Aglio et al.\ \cite{dfm16} for inside players, we have defined the rank of group essential criticality of a player and we have compared this notion with that of differential criticality from Beisbart \cite{b10}. For any given coalition, the two notions coincide for those players who belong to the essential critical coalitions of minimal cardinality, who are labelled minimal critical, or $m$-critical players, but usually differ for the remaining players. Group criticality satisfies several properties that differential criticality lacks:  it is strongly monotone in the sense defined by Young \cite{y85}, it is never associated with null players and it is not deterministically determined by the set of $m$-critical players. Conversely, $d$-criticality is the only criterion that is compatible with the principle that a measure of criticality should evaluate the degree of opportunity that the other players give. When several players work together to overturn a game's outcome, the two principles of opportunity and essentiality in their current definitions  are incompatible and a decision must be taken on which one to save. We believe that this hiatus  originates in an improper application of the opportunity principle in a context where players do not act alone, but work together to reach their goals. For this reason and with the goal of reconciling the two principles, we propose an opportunity test involving several players which is satisfied by the newly introduced notion of group criticality.
	
	Regarding future research directions, we believe that investigating the relationship between indices that measure the importance of the single players by evaluating their criticality jointly with others and those that measure the importance of coalitions as a whole should be analysed in more detail. We refer to some extensions of power indices to include whole coalitions instead of single players. Both the Shapley and the Banzhaf values have been extended to measure the interaction among players in Grabisch and Roubens \cite{gr98}. In a similar vein, Hausken and Mohr \cite{hm01} have defined the Shapley value of one player to another player and this work has been recently extended by Hausken \cite{h20} to define the Shapley value of one coalition to another coalition.

\end{document}